\pgfplotsset{compat=1.18}
\title{High-Performance Exact Synthesis of Two-Qubit Quantum Circuits}
\author{Andrew N.~Glaudell}
\affiliation{Photonic Inc., Vancouver, BC, Canada}
\author{Michael Jarret}
\email{mjarretb@gmu.edu}
\affiliation{Department of Mathematical Sciences, George Mason University, Fairfax, VA, USA}
\affiliation{Department of Computer Science, George Mason University, Fairfax, VA, USA}
\affiliation{Quantum Science and Engineering Center, George Mason University, Fairfax, VA, USA}
\author{Swan Klein}
\email{sklein12@gmu.edu}
\affiliation{Department of Mathematical Sciences, George Mason University, Fairfax, VA, USA}
\author{Samuel S.~Mendelson}
\email{smendels@gmu.edu}
\affiliation{Department of Mathematical Sciences, George Mason University, Fairfax, VA, USA}
\affiliation{Department of Computer Science, George Mason University, Fairfax, VA, USA}
\affiliation{Quantum Science and Engineering Center, George Mason University, Fairfax, VA, USA}
\author{T.~C.~Mooney}
\email{tmooney@umd.edu}
\affiliation{Department of Physics, University of Maryland, College Park, MD, USA}
\affiliation{Joint Center for Quantum Information and Computer Science, University of Maryland, College Park, MD, USA}
\affiliation{Joint Quantum Institute, University of Maryland, College Park, MD, USA}
\author{Mingzhen Tian}
\email{mtian1@gmu.edu}
\affiliation{Quantum Science and Engineering Center, George Mason University, Fairfax, VA, USA}
\affiliation{Department of Physics, George Mason University, Fairfax, VA, USA}
\title{High-Performance Exact Synthesis of Two-Qubit Quantum Circuits}
\newcommand{\Z}{\mathbb{Z}}
\newcommand{\SO}{\ensuremath{\operatorname{SO}(6)} }
\newcommand{\swapgate}{\textsc{swap}}
\newcommand{\G}{\mathbb{G}}
\newtheorem{theorem}{Theorem}[section]
\newtheorem{lemma}[theorem]{Lemma}
\newtheorem*{proposition*}{Proposition}
\newtheorem{definition}[theorem]{Definition}
\NewDocumentCommand{\g}{o o}{
    \ensuremath{
      G%
      \IfValueT{#1}{_{#1 \IfValueT{#2}{, #2}}}
  }
}
\NewDocumentCommand{\tcount}{o}{
    \ensuremath{
        \mathcal{T}%
        \IfValueT{#1}{_{#1}}
    }
}
\newcommand{\cliff}{\mathcal{C}_2}
\newcommand{\clifft}{\mathcal{J}_2}
\newcommand{\alglabel}[1]{%
  \begingroup
  \edef\csname @currentlabel\endcsname{%
    \number\csname c@ALG@line\endcsname
  }%
  \label[ALG@line]{#1}%
  \endgroup
}
\newenvironment{loopinvariantproof}
  {\begin{description}[leftmargin=1.5em,style=nextline]}
  {\end{description}}
\newcommand{\Invariant}{\item[\textbf{Invariant.}]}
\newcommand{\Initialization}{\item[\textbf{Initialization.}]}
\newcommand{\Maintenance}{\item[\textbf{Maintenance.}]}
\newcommand{\Termination}{\item[\textbf{Termination.}]}
\begin{document}

\maketitle

\date{Dec 22, 2025}

\begin{abstract}
Exact synthesis provides unconditional optimality and canonical structure, but is
often limited to small, carefully scoped regimes.
We present an exact synthesis framework for two-qubit circuits over the
Clifford+$T$ gate set that optimizes $T$-count exactly.
Our approach exhausts a bounded search space, exploits algebraic
canonicalization to avoid redundancy, and constructs a lookup table of optimal
implementations that turns synthesis into a query.
Algorithmically, we combine meet-in-the-middle ideas with provable pruning rules
and problem-specific arithmetic designed for modern hardware.
The result is an exact, reusable synthesis engine with substantially improved
practical performance.
\end{abstract}

\section{Introduction}

Exact quantum circuit synthesis serves a different purpose from heuristic
optimization.
Rather than producing a good circuit quickly, it aims to completely characterize
a well-defined region of circuit space: which unitaries are reachable and what optimal circuits look like under a
chosen cost metric.
When such a characterization is possible, it yields canonical representatives, reusable building blocks, and a comparison
for evaluating heuristic quality.

Recent work shows that this approach remains valuable even when the general
problem is intractable.
Bravyi, Latone, and Maslov demonstrate this clearly for Clifford circuits:
they exploit symmetry and canonicalization to exhaustively map circuit space up to
six qubits, store optimal representatives in a database, and support fast
extraction by query \cite{bravyi20226}.
Optimal synthesis may not scale well, but it can be pre-computed circuits can be used as a resource for other methods.

In fault-tolerant compilation, the dominant resource cost is typically associated with non-Clifford operations. In the Clifford+$T$ gate set, this motivates optimizing \emph{T-count}. Although exact $T$-count optimization is computationally intractable in general \cite{van2023optimising}, compilers repeatedly encounter small subcircuits and canonical building blocks whose optimal realizations recur across many instances. This creates an opportunity for \emph{offline exact synthesis}: exhaustively enumerate a structured regime once, store optimal representatives up to natural equivalences, and reduce subsequent synthesis to fast database queries.

We exhaust a bounded search space, canonicalize targets to avoid redundant work,
and construct a lookup table (LUT) of optimal circuits. The output is a database of
optimal two-qubit Clifford+$T$ circuits that can be reused across
compilation tasks. Any query to our database returns a circuit with provably minimal $T$-count for its target unitary.
One can use the LUT directly for optimization, embed it as a subroutine inside
larger compilers, or continually generate results on a cluster to map out large regions of the
space ahead of time.

Although we instantiate everything in Clifford+$T$, we view the current approach as more
general than that choice.
The algorithm has two components: (1) an outer shell that exhaustively generates shortest circuits,
puts representatives into canonical form, and performs meet-in-the-middle style joining \cite{Amy2013}, and (2) a high performance backbone that
implements gate-set-specific algebra. 

Because of its ubiquitous nature, 
Clifford+$T$ optimized over $T$-count is a good first gate set.
Nonetheless, the same ``pay once, query forever'' strategy applies to other discrete
gate sets and other resource counts, as long as one can represent elements
exactly, define a cost measure, and decide exact equivalence in the intended
model. In fact, we realized that some of the heavier mathematical machinery below may have been an artifact of analysis and a more abstract, general approach might be more efficient.

Our goal is not exact optimization at large scale, nor replacing heuristic compilers, but rather we intend to provide reusable, exact infrastructure for a structured two-qubit regime that can be embedded as a backend inside larger compilation pipelines. Throughout this work, we emphasize correctness, reuse, and performance within this scope rather than asymptotic scalability.

Our performance depends heavily on our algebraic backbone: by replacing general-purpose linear algebra with gate-set-aware arithmetic and linear maps, we are able to gain many orders of magnitude in efficiency. That is, we rebuild every layer of the computation from the bottom up using customized data structures. This makes the problem far more tractable than previously seemed conceivable.

Recent results make clear why scoped, structure-driven approaches are the
right level of ambition.
Van de Wetering and Amy show that minimizing $T$-count or $T$-depth over
Clifford+$T$ is NP-hard in the worst case, even in the decision form ``does there
exist an equivalent circuit with at most $k$ T gates?'' \cite{van2023optimising}.
They also place the problem in $\mathrm{NP}^{\mathrm{NQP}}$, because verification naturally requires deciding \emph{exact} circuit equivalence
rather than approximate closeness \cite{van2023optimising}.
Kjelstrøm, Pavlogiannis, and van de Pol strengthen the lower bound further:
they prove co-NQP-hardness for a broad family of exact circuit optimization
tasks, and their corollary applies directly to Clifford+$T$, where minimizing
non-Clifford gates coincides with minimizing $T$-count \cite{Kjelstrom}.
Taken together, these results rule out fast, always-exact, general-purpose
optimization at scale (unless widely believed complexity-theoretic separations
collapse).

Although these hardness results limit our expectations, they do not diminish the value of exact
synthesis in regimes where completeness and reuse matter more than asymptotic
scalability.
Compilers repeatedly encounter small subcircuits, gadget fragments, and canonical
blocks whose optimal implementations recur across many instances.
In this setting, offline enumeration can pay for itself: we do the expensive work
once, and we reuse the result indefinitely. Since the same subsequence might be compiled multiple times, a reference-able database or efficient data structure can lead to improved circuitry and compile times.

The technical contribution of this paper is an exact synthesis engine that makes this amortized approach practical for two-qubit Clifford+$T$ circuits. Algorithmically, we build on meet-in-the-middle ideas \cite{Amy2013}, but redesign enumeration around canonicalization, equivalence relations, and reuse, rather than treating these as secondary concerns. Representationally, we replace general-purpose linear algebra with gate-set-aware arithmetic in an $\SO$ model, so that each $T$-count can be implemented using specialized constant-time kernels. This shifts the practical bottleneck from avoidable constant factors to the intrinsic growth of the quotient search space, enabling exact $T$-count synthesis at depths that are impractical for baseline exact-synthesis implementations on comparable hardware.

\paragraph{Contributions.}
This paper:
(i) presents an exact two-qubit Clifford+$T$ synthesis procedure with provably
minimal $T$-count within the enumerated domain;
(ii) gives canonicalization and pruning rules with full correctness proofs;
(iii) describes a high-performance implementation based on problem-specific
arithmetic and HPC-aware design; and
(iv) reports empirical scaling results, including direct benchmarks against
meet-in-the-middle baselines.

The remainder of the paper is organized as follows.
\Cref{sec:Preliminaries} fixes notation, the synthesis model, and the definition
of $T$-count.
\Cref{sec:theory} describes the structure of the search space and the
reductions that keep enumeration finite.
\Cref{sec:algorithms} presents the algorithm, pseudocode, and proofs of
correctness.
\Cref{sec:Results} reports performance and scaling results.
We conclude with limitations and directions for extending this approach.

\section{Preliminaries}\label{sec:Preliminaries}

\subsection{Clifford+T synthesis}\label{sub:background}

Quantum algorithms are specified at a high level, either as abstract
unitaries or as circuits over idealized gate sets.
In practice, both hardware constraints and fault-tolerant error correction restrict
the available operations.
\emph{Quantum circuit synthesis} is the task of translating an abstract description
of a unitary into a circuit over a specified gate set.
When the target unitary can be represented exactly, this task is referred to as
\emph{exact synthesis}; otherwise, one considers \emph{approximate synthesis}, where
the goal is to approximate the target to within a prescribed tolerance.

The Clifford+$T$ gate set plays a central role in fault-tolerant quantum computing.
It is universal for quantum computation, while also admitting efficient error
correction schemes for the Clifford subgroup.
A common generating set is
\[
\{H, S, CX, T\},
\]
where $H$ is the Hadamard gate, $S$ is the phase gate, $CX$ is the controlled-NOT
gate, and $T=\sqrt{S}$ is the $\pi/8$ rotation. Explicitly,
\[
H = \frac{1}{\sqrt{2}}
\begin{bmatrix}
1 & 1 \\
1 & -1
\end{bmatrix}, \quad
S =
\begin{bmatrix}
1 & 0 \\
0 & i
\end{bmatrix}, \quad
CX =
\begin{bmatrix}
1 & 0 & 0 & 0 \\
0 & 1 & 0 & 0 \\
0 & 0 & 0 & 1 \\
0 & 0 & 1 & 0
\end{bmatrix}, \quad
T =
\begin{bmatrix}
1 & 0 \\
0 & e^{i\pi/4}
\end{bmatrix}.
\]
The gates $H$, $S$, and $CX$ generate the Clifford group, which is not universal.
The $T$ gate is the only non-Clifford gate in this set, and in most fault-tolerant
architectures it dominates the resource cost of a computation.
For this reason, a large body of work focuses on minimizing the \emph{$T$-count} of
Clifford+$T$ circuits.

Much of the existing literature on exact Clifford+$T$ synthesis is based on a
number-theoretic characterization of these circuits
\cite{giles2013exact,gosset2014algorithm,kliuchnikov2012fast,
kliuchnikov2013synthesis,selinger2012efficient,kliuchnikov2024stabilizer,
Kliuchnikov2024multi,li2023fast}.
In this framework, Clifford+$T$ unitaries are represented using matrices over
rings of cyclotomic integers, and synthesis proceeds by reducing a numerical
invariant—often the \emph{least denominator exponent}—that correlates with $T$-count.
For single-qubit circuits, this approach yields optimal exact synthesis algorithms \cite{giles2013exact},
and for two qubits it leads to asymptotically optimal methods \cite{li2023fast}.

While the least denominator exponent is a powerful organizing principle, it does
not provide a monotone proxy for $T$-count beyond the single-qubit setting.
Even for two-qubit circuits, the least denominator exponent can decrease while
$T$-count increases.
This complicates its direct use as a control parameter for exhaustive enumeration
or database construction.

In this work, we take a different approach.
Rather than using number-theoretic invariants to guide a single synthesis trajectory,
we use the underlying algebraic structure to \emph{enumerate} Clifford+$T$ circuits
by $T$-count and store optimal results for later reuse.
Our goal is not to synthesize one circuit at a time, but to construct a database
of optimal two-qubit Clifford+$T$ circuits up to a prescribed $T$-count.

The core idea is to enumerate circuits recursively, organize them into equivalence
classes under exact Clifford+$T$ equivalence, and store a single optimal
representative for each class.
This dramatically reduces the size of the database compared to storing all circuits
explicitly.
The tradeoff is that identifying canonical representatives is computationally more
expensive than raw enumeration, and it becomes one of the dominant costs in database
generation.

Two ingredients make this approach practical.
First, we adopt a nontraditional generating set for Clifford+$T$ circuits that
facilitates efficient enumeration by $T$-count \cite{li2023fast}.
Second, we define an equivalence relation tailored to two-qubit Clifford+$T$ that
allows us to prune redundant search paths without sacrificing optimality.
The resulting database can be used in several
ways: as a direct lookup table, as the basis for meet-in-the-middle synthesis, or
as a backend for sliding-window and peephole optimization. Without aggressive reduction, storing all two-qubit Clifford+$T$ circuits quickly
becomes infeasible.
With canonicalization, exhaustive enumeration up to nontrivial $T$-count bounds
becomes practical on modern hardware.

\section{Theory}\label{sec:theory}

This section introduces two tools that drive our enumeration algorithm.
First, we represent two-qubit Clifford+$T$ circuits inside \SO, where
multiplication and comparison are cheap.
Second, we quotient by left and right Clifford multiplication, so that we store
only one representative per equivalence class. We ultimately show that one representative per equivalence class is sufficient to generate all $T$-optimal Clifford+$T$ circuits of a given $T$-count.

\subsection{Notation}
Let $R$ be a ring.
We write $\mathrm{U}_n(R)$ for the group of $n\times n$ unitary matrices with
entries in $R$, $\mathrm{SU}_n(R)$ for the determinant-one subgroup, and
$\mathrm{SO}_n(R)$ for the real special orthogonal group.
When $R=\mathbb{C}$ or $\mathbb{R}$ we abbreviate $\mathrm{U}_n(\mathbb{C})=\mathrm{U}(n)$,
$\mathrm{SU}_n(\mathbb{C})=\mathrm{SU}(n)$, and $\mathrm{SO}_n(\mathbb{R})=\mathrm{SO}(n)$.

Let $\mathcal{J}_n\subset \mathrm{U}(2^n)$ be the subgroup generated by $n$-qubit
Clifford+$T$ circuits, and let $\mathcal{C}_n\subset \mathcal{J}_n$ be the
$n$-qubit Clifford group.
For $k\ge 0$, let $\tcount[k]\subset \mathcal{J}_2$ denote the set of two-qubit
unitaries with \emph{optimal} T-count equal to $k$.

We also use $[n]=\{1,\dots,n\}$ and let $e_1,\dots,e_n$ be the standard basis of
$\mathbb{R}^n$.
A \emph{homomorphism} is a map-preserving multiplication, and an $m$-to-one
\emph{cover} is a surjective homomorphism for which each fiber has cardinality $m$.

\subsection{Equivalence classes}
\label{sub:equiv_classes}
Our goal in this section is to formalize equivalences of quantum circuits. Intuitively, if two circuits $U_0,U_1$ act on the same (normalized) quantum state $\ket{\psi}$ such that $\abs{\bra{\psi}U_0^\dagger U_1 \ket{\psi}}=1$, we claim that these states are physically indistinguishable and so any natural definition of equivalence should have that $U_1 \in [U_0]$. Although any equivalence relation should retain this property, there are other forms of equivalence one might introduce and exploit. For instance, although $\swapgate \cdot U \cdot \swapgate \neq U$ for general $U$, one can readily see that these two circuits are ``equivalent'' in some logical sense. In fact, since $\swapgate \in \cliff$, we trivially have that $\swapgate \cdot U \cdot \swapgate U \sim_\mathcal{C} U $.

Quantum circuits are defined only up to global phase: multiplying a circuit by
a scalar of unit modulus does not change the physical operation it implements.
In addition, for compiling it is natural to identify circuits that
differ by left or right multiplication by Clifford operations, since such
differences correspond to basis changes that are inexpensive to implement and
do not affect non-Clifford cost.

Accordingly, throughout this work we treat two Clifford+$T$ circuits as
\emph{equivalent} if they represent the same unitary up to global phase and
Clifford conjugation.
Our goal is not to distinguish between equivalent circuits, but to store a
single optimal representative for each equivalence class.
The enumeration and pruning procedures developed later are designed to respect
this equivalence from the outset.

To make this equivalence computationally tractable, we work with a representation
of two-qubit Clifford+$T$ circuits inside \SO.
This representation preserves exact equivalence while enabling fast matrix
operations and explicit bookkeeping of Clifford actions.

\subsection{The SO(6) representation}

\label{sub:so6}
We begin with a standard representation of $\mathrm{SU}(4)$ in \SO.

\begin{theorem}\label{thm:except}
There exists a two-to-one cover $\Phi:\mathrm{SU}(4)\to \SO$.
For $U\in\mathrm{SU}(4)$ we write $\overline{U}=\Phi(U)$.
In particular, $\overline{U}=\overline{-U}$ and
$\overline{iU}=\overline{-iU}=-\overline{U}$.
\end{theorem}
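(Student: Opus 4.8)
The plan is to build $\Phi$ from the exceptional isomorphism $\mathrm{SU}(4)\cong\mathrm{Spin}(6)$, realized concretely through the action of $\mathrm{SU}(4)$ on bivectors. First I would let $\mathrm{SU}(4)$ act on $\Lambda^2\C^4\cong\C^6$ by $\Lambda^2U:\ v\wedge w\mapsto Uv\wedge Uw$; this assignment is multiplicative, $\Lambda^2(UU')=(\Lambda^2U)(\Lambda^2U')$, and polynomial in the entries of $U$, hence a smooth homomorphism. Because $\det U=1$, the wedge pairing $\Lambda^2\C^4\times\Lambda^2\C^4\to\Lambda^4\C^4\cong\C$, $(\alpha,\beta)\mapsto\alpha\wedge\beta$, is a nondegenerate symmetric bilinear form preserved by $\Lambda^2U$, so $\Lambda^2U\in\mathrm{SO}_6(\C)$.

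To descend to the \emph{real} group $\SO$, I would exhibit an antilinear involution $J$ on $\Lambda^2\C^4$ commuting with every $\Lambda^2U$: take entrywise complex conjugation in the standard basis composed with the Hodge-type duality $e_i\wedge e_j\mapsto\pm\,e_k\wedge e_l$, where $\{i,j,k,l\}=\{1,2,3,4\}$ and the sign comes from the fixed volume form. One checks $J^2=+\mathrm{id}$ using $U^\dagger=U^{-1}$ and $\det U=1$, that the wedge form restricted to $V:=\ker(J-\mathrm{id})\cong\R^6$ is real and, after an overall real rescaling, positive definite, and that each $\Lambda^2U$ preserves $V$. Choosing an orthonormal real basis of $V$ — in practice a ``magic''-type basis such as $\tfrac{1}{\sqrt2}(e_1\wedge e_2\pm e_3\wedge e_4)$ together with its analogues for the other complementary pairs $(13)(24)$ and $(14)(23)$ — identifies $\Lambda^2U|_V$ with a matrix $\Phi(U)=\overline U\in\mathrm{SO}_6(\R)=\SO$, and fixing this basis is exactly what makes the later $\SO$ arithmetic explicit.

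For the covering properties: $\Lambda^2$ is a nontrivial irreducible representation of $\mathrm{SU}(4)$, so its kernel is a proper normal subgroup of the (almost simple) group $\mathrm{SU}(4)$, hence central; a central element $\lambda I$ acts on $\Lambda^2\C^4$ by $\lambda^2$ and lies in $\mathrm{SU}(4)$ iff $\lambda^4=1$, so $\ker\Phi=\{\lambda I:\lambda^2=1\}=\{\pm I\}$. Thus $\Phi$ is a local diffeomorphism; since $\dim\mathrm{SU}(4)=15=\dim\mathrm{SO}(6)$ and both groups are connected, the image is open and closed, hence all of $\SO$, and every fiber $\Phi^{-1}(\overline U)=\{U,-U\}$ has exactly two elements, so $\Phi$ is a two-to-one cover. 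The phase identities then follow at once: $-I\in\ker\Phi$ gives $\overline{-U}=\overline U$; and $iI,\,-iI\in\mathrm{SU}(4)$ (determinant $(\pm i)^4=1$) each act on $\Lambda^2\C^4$ as multiplication by $(\pm i)^2=-1$, so $\Phi(\pm iI)=-I_6\in\SO$ (note $\det(-I_6)=(-1)^6=1$), and multiplicativity yields $\overline{iU}=\overline{-iU}=-\overline U$.

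The main obstacle is the construction of $J$ and the verification that the wedge pairing restricts to a \emph{definite} real form on $V$, so that the image genuinely lands in the compact group $\SO$ rather than merely $\mathrm{SO}_6(\C)$; this is the one place where real (if standard) work is required, and it is also where the explicit real basis driving the paper's fast arithmetic gets pinned down. Everything else — multiplicativity, the kernel computation, and surjectivity via dimension plus connectedness — is routine.
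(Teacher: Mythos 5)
Your construction is correct, but it is worth noting that the paper does not actually prove this statement at all: it simply records Theorem~\ref{thm:except} as a known consequence of the exceptional isomorphism $\mathrm{Spin}(6)\cong\mathrm{SU}(4)$ and moves on to using it computationally. What you have written out is the standard construction underlying that citation: $\mathrm{SU}(4)$ acting on $\Lambda^2\C^4\cong\C^6$, preservation of the wedge pairing because $\det U=1$, descent to a compact real form via an antilinear involution $J$ commuting with the action, the kernel computation $\ker\Phi=\{\pm I\}$ via centrality, surjectivity from equal dimensions plus connectedness and compactness, and the phase identities from the action of the center ($\pm I\mapsto I_6$, $\pm iI\mapsto -I_6$). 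This is a genuine addition relative to the paper, and it has the side benefit of making explicit where the ``magic''-type real basis comes from, which is exactly the basis in which the paper's explicit $6\times 6$ images of $H$, $S$, $T$, $CZ$ are computed. Two small imprecisions you should fix when writing this up: the identity $J^2=+\mathrm{id}$ is a property of $J$ alone (a sign count on the Hodge-type duality) and does not use $U^\dagger=U^{-1}$ or $\det U=1$ — those hypotheses are what you need to show $J$ \emph{commutes} with $\Lambda^2U$; and in your real basis the combinations $\tfrac{1}{\sqrt2}(e_1\wedge e_2-e_3\wedge e_4)$ lie in the $(-1)$-eigenspace of $J$, so the $J$-fixed basis vectors are $\tfrac{1}{\sqrt2}(e_1\wedge e_2+e_3\wedge e_4)$ together with $\tfrac{i}{\sqrt2}(e_1\wedge e_2-e_3\wedge e_4)$ and their analogues for the pairings $(13)(24)$ and $(14)(23)$; with those factors of $i$ inserted, the wedge form restricted to $V$ is indeed positive definite after the overall rescaling you mention, and the image lands in the compact group $\SO$ as required.
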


Theorem~\ref{thm:except} is a known consequence of the exceptional
isomorphism $\mathrm{Spin}(6)\cong \mathrm{SU}(4)$.
We use it as a computational representation: it replaces $4 \times 4$ complex matrices with $6\times 6$
real orthogonal matrices. This allows us to compute products of reals rather than introducing methods to handle complex numbers.

Since two Clifford+$T$ circuits are equivalent if they differ only by global phase, we extend
$\overline{(\cdot)}$ from $\mathrm{SU}(4)$ to $\mathrm{U}(4)$ by choosing an
arbitrary determinant-one representative.
Concretely, for $U\in\mathrm{U}(4)$ we pick a fourth root of $\det(U)$ and set
\[
U' = U/\det(U)^{1/4}\in \mathrm{SU}(4),
\qquad
\overline{U} := \overline{U'}.
\]
This choice is not canonical: different fourth roots differ by a factor in
$\{1,i,-1,-i\}$, and by Theorem~\ref{thm:except} these choices map to at most two
images in \SO (differing by an overall sign).
Nothing in our algorithm depends on which representative is chosen, because our
equivalence relation ignores global phase and all computations take place inside
$\overline{\mathcal{J}_2}\subset \SO$.
For a set $S\subset \mathrm{U}(4)$ we write $\overline{S}=\{\overline{U}:U\in S\}$.

The usefulness of \cref{thm:except} for compilation was first seen in \cite{glaudell2021optimal} and later in \cite{li2023fast}. Recall, $\cliff$ is generated by the Hadamard gate $H$, the phase gate $S$, and the controlled-$Z$ gate $CZ$ and $\clifft$ is generated by adding the gate $T=\sqrt{S}$. Under the map defined above, these matrices are given by:
\[\overline{H\otimes I} = \begin{bmatrix}
    0 & 0 & 1 & 0 & 0 & 0\\
    0 & -1 & 0 & 0 & 0 & 0\\
    1 & 0 & 0 & 0 & 0 & 0\\
    0 & 0 & 0 & 1 & 0 & 0\\
    0 & 0 & 0 & 0 & 1 & 0\\
    0 & 0 & 0 & 0 & 0 & 1\\
\end{bmatrix},\overline{S\otimes I} = \begin{bmatrix}
    0 & -1 & 0 & 0 & 0 & 0\\
    1 & 0 & 0 & 0 & 0 & 0\\
    0 & 0 & 1 & 0 & 0 & 0\\
    0 & 0 & 0 & 1 & 0 & 0\\
    0 & 0 & 0 & 0 & 1 & 0\\
    0 & 0 & 0 & 0 & 0 & 1\\
\end{bmatrix},\overline{T\otimes I} = \begin{bmatrix}
    \frac{1}{\sqrt{2}} & -\frac{1}{\sqrt{2}} & 0 & 0 & 0 & 0\\
    \frac{1}{\sqrt{2}} & \frac{1}{\sqrt{2}} & 0 & 0 & 0 & 0\\
    0 & 0 & 1 & 0 & 0 & 0\\
    0 & 0 & 0 & 1 & 0 & 0\\
    0 & 0 & 0 & 0 & 1 & 0\\
    0 & 0 & 0 & 0 & 0 & 1\\
\end{bmatrix}\]
\[\overline{I\otimes H} = \begin{bmatrix}
    1 & 0 & 0 & 0 & 0 & 0\\
    0 & 1 & 0 & 0 & 0 & 0\\
    0 & 0 & 1 & 0 & 0 & 0\\
    0 & 0 & 0 & 0 & 0 & 1\\
    0 & 0 & 0 & 0 & -1 & 0\\
    0 & 0 & 0 & 1 & 0 & 0\\
\end{bmatrix},\overline{I\otimes S} = \begin{bmatrix}
    1 & 0 & 0 & 0 & 0 & 0\\
    0 & 1 & 0 & 0 & 0 & 0\\
    0 & 0 & 1 & 0 & 0 & 0\\
    0 & 0 & 0 & 0 & -1 & 0\\
    0 & 0 & 0 & 1 & 0 & 0\\
    0 & 0 & 0 & 0 & 0 & 1\\
\end{bmatrix},\overline{I\otimes T} = \begin{bmatrix}
    1 & 0 & 0 & 0 & 0 & 0\\
    0 & 1 & 0 & 0 & 0 & 0\\
    0 & 0 & 1 & 0 & 0 & 0\\
    0 & 0 & 0 & \frac{1}{\sqrt{2}} & -\frac{1}{\sqrt{2}} & 0\\
    0 & 0 & 0 & \frac{1}{\sqrt{2}} & \frac{1}{\sqrt{2}} & 0\\
    0 & 0 & 0 & 0 & 0 & 1\\
\end{bmatrix},\]
\[\overline{CZ} = \begin{bmatrix}
    0 & -1 & 0 & 0 & 0 & 0\\
    1 & 0 & 0 & 0 & 0 & 0\\
    0 & 0 & 0 & 0 & 0 & -1\\
    0 & 0 & 0 & 0 & -1 & 0\\
    0 & 0 & 0 & 1 & 0 & 0\\
    0 & 0 & 1 & 0 & 0 & 0\\
\end{bmatrix}.\]

Through direct computation we see $\overline{\cliff}=\SO(\{-1,0,1\})$, the set of \emph{signed permutation matrices} with determinant 1, and $\overline{\clifft}\subset \text{SO}_6\left(\Z[1/\sqrt{2}]\right)$ where 
\[
\Z\left[\frac{1}{\sqrt{2}}\right] = \left\{\left.\frac{a+b\sqrt{2}}{\sqrt{2}^c}~\right|~a,b,c\in\Z,c\geq 0\right\}.
\] In fact, $\overline{\clifft} = \text{SO}_6\left(\Z[1/\sqrt{2}]\right)$ \cite{li2023fast}. Because the kernel of this map is contained in the center of $\clifft$ (those operations that commute with all Clifford-$T$ circuits), and the center of $\clifft$ is contained in $\cliff$, we can always introduce the correct global phase using a Clifford circuit when mapping back to $\clifft$ from $\overline{\clifft}$.
With these observations in mind, we define a new set of generators $\G$.

\begin{definition}
    Let $P(i,j)\in\text{SO}(6)$ be the signed permutation matrix such that $P(i,j)e_i = -e_j$, and $P(i,j)e_k = e_k$ if $k\neq i,j$. Let $\g[i,j] = P(i,j)^\dagger\left(\overline{T\otimes I}\right)P(i,j)$ and define $M = \{\g[i,j]~|~ 1\leq i, j\leq 6,i\neq j\}$.
\end{definition}

Each $\g[i,j]$ can be written as the image of $\overline{C^\dagger (T\otimes I) C}$ for some Clifford operator $C$. Each $\g[i,j]$ is a rearrangement of $\overline{T\otimes I}$ with the first and $i$-th rows and columns swapped and the second and $j$-th rows and columns swapped. As an example, explicitly \[
\g[2,4] = \begin{bmatrix}
    1 & 0 & 0 & 0 & 0 & 0\\
    0 & \frac{1}{\sqrt{2}} & 0 & -\frac{1}{\sqrt{2}} & 0 & 0\\
    0 & 0 & 1 & 0 & 0 & 0\\
    0 & \frac{1}{\sqrt{2}} & 0 & \frac{1}{\sqrt{2}} & 0 & 0\\
    0 & 0 & 0 & 0 & 1 & 0\\
    0 & 0 & 0 & 0 & 0 & 1\\
\end{bmatrix}.
\]

We show in the following lemma that $\G$ is in fact a generating set for $\overline{\clifft}$.

\begin{lemma}\label{lemma:generators}
    The set $\G$ generates $\overline{\clifft}$.
\end{lemma}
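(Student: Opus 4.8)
The plan is to establish the two inclusions $\langle\G\rangle\subseteq\overline{\clifft}$ and $\overline{\clifft}\subseteq\langle\G\rangle$. The first is immediate from the remarks preceding the lemma: each $\g[i,j]$ equals $\overline{C^\dagger(T\otimes I)C}$ for some Clifford $C\in\cliff$, hence lies in the group $\overline{\clifft}$, and so does everything it generates. For the converse, I would reduce to two sub-claims. Since $\cliff$ is generated by $\{H\otimes I,\,I\otimes H,\,S\otimes I,\,I\otimes S,\,CZ\}$ and $\clifft=\langle\cliff,\,T\otimes I\rangle$ (because $I\otimes T=\swapgate\,(T\otimes I)\,\swapgate$ with $\swapgate\in\cliff$), and since $\overline{(\cdot)}$ restricted to $\clifft$ is a group homomorphism onto $\overline{\clifft}$ (the fourth-root sign ambiguity being absorbed by $-I_6\in\overline{\cliff}$), it suffices to prove
\[
\overline{\cliff}\subseteq\langle\G\rangle
\qquad\text{and}\qquad
\overline{T\otimes I}\in\langle\G\rangle .
\]

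The heart of the argument is $\overline{\cliff}\subseteq\langle\G\rangle$. From the explicit description of $\g[i,j]$ as the copy of $\overline{T\otimes I}$ whose active block sits on indices $\{i,j\}$, the square $\g[i,j]^2$ acts on $\Span\{e_i,e_j\}$ by $e_i\mapsto e_j\mapsto -e_i$ and fixes the remaining basis vectors; call this quarter-turn $R_{ij}$. Each $R_{ij}$ is a determinant-one signed permutation, hence lies in $\overline{\cliff}=\SO(\{-1,0,1\})$, and it belongs to $\langle\G\rangle$ because it is a power of a generator. So the claim reduces to the purely group-theoretic fact that the quarter-turns $\{R_{ij}:1\le i<j\le 6\}$ generate all of $\overline{\cliff}$. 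To see this, note that $R_{ij}^2$ is the diagonal matrix with $-1$ in positions $i,j$ and $+1$ elsewhere; over $\F_2$ the vectors $e_i+e_j$ span the even-weight subspace of $\F_2^{\,6}$, so the $R_{ij}^2$ generate the entire subgroup $E\cong(\Z/2)^5$ of even-sign diagonal matrices, which is precisely the kernel of the ``forget the signs'' homomorphism $\overline{\cliff}\twoheadrightarrow S_6$. Since the $R_{ij}$ map to the transpositions $(i\,j)$, which generate $S_6$, the subgroup $\langle R_{ij}\rangle$ surjects onto $S_6$ and contains $E$, so $|\langle R_{ij}\rangle|\ge 2^{5}\cdot 6!=|\overline{\cliff}|$ and the two coincide. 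In particular $-I_6\in\overline{\cliff}\subseteq\langle\G\rangle$, which justifies the sign bookkeeping used above.

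For the second sub-claim, fix any admissible $i\neq j$: the signed permutation $P(i,j)$ defining $\g[i,j]$ lies in $\overline{\cliff}\subseteq\langle\G\rangle$, so $\overline{T\otimes I}=P(i,j)\,\g[i,j]\,P(i,j)^\dagger\in\langle\G\rangle$. Together with the reduction this gives $\overline{\clifft}\subseteq\langle\G\rangle$, and combined with the easy inclusion we conclude $\langle\G\rangle=\overline{\clifft}$.

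I expect the main obstacle --- though it is elementary --- to be the group-theoretic lemma that coordinate-plane quarter-turns generate the determinant-one signed permutation group: the content there is the identification of $\overline{\cliff}$ with the extension $1\to(\Z/2)^5\to\overline{\cliff}\to S_6\to 1$ together with the $\F_2$-span statement for the vectors $e_i+e_j$. A second point that must be stated carefully rather than computed is that $\overline{(\cdot)}$ is only a homomorphism up to an overall sign on $\mathrm{U}(4)$, which is why one records $-I_6\in\overline{\cliff}$; once inside $\overline{\clifft}$ this sign plays no role.
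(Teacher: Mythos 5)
Your proof is correct, but it takes a different route from the paper. The paper's proof is a direct computation: it exhibits explicit words in $\G$ for the images of the seven standard generators, namely $\overline{T\otimes I}=\g[1,2]$, $\overline{I\otimes T}=\g[4,5]$, $\overline{S\otimes I}=\g[1,2]^2$, $\overline{I\otimes S}=\g[4,5]^2$, $\overline{H\otimes I}=\g[1,3]^2\g[2,3]^4$, $\overline{I\otimes H}=\g[4,6]^2\g[5,6]^4$, and $\overline{CZ}=\g[1,2]\g[3,6]\g[4,5]$, and concludes immediately. You instead prove the structural statement that the squares $\g[i,j]^2$ are coordinate-plane quarter-turns whose squares generate the even-sign diagonal subgroup $(\Z/2)^5$ and whose images generate $S_6$, so that $\langle\G\rangle$ contains the entire determinant-one signed permutation group, i.e.\ all of $\overline{\cliff}$; you then reduce $\overline{\clifft}$ to $\overline{\cliff}$ together with $\overline{T\otimes I}$ (which you recover by conjugating $\g[i,j]$ by $P(i,j)$, though the shortcut $\overline{T\otimes I}=\g[1,2]$ is available directly), while correctly flagging that $\overline{(\cdot)}$ on $\mathrm{U}(4)$ is only a homomorphism up to sign and that $-I_6$ lies in the subgroup you have built. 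Your extension argument $1\to(\Z/2)^5\to\overline{\cliff}\to S_6\to 1$ is sound, and the $\F_2$-span claim for the vectors $e_i+e_j$ is exactly what is needed. What the paper's version buys is brevity and concrete low-$T$-count words for the standard gates, verifiable entry-by-entry; what yours buys is a gate-independent explanation of \emph{why} $\G$ suffices --- the squares of the generators already exhaust the signed-permutation Cliffords --- which dovetails with how the rest of the paper manipulates signed permutations (as in the commutation rules of the technical lemma and the involutive-generator discussion), at the cost of slightly more scaffolding: you lean on the stated identification $\overline{\cliff}\subseteq\SO(\{-1,0,1\})$ and on careful sign bookkeeping that the paper's explicit identities sidestep.
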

\begin{proof}
    By definition $M\subset \overline{\clifft}$. It then suffices to show that $\overline{T\otimes I}$, $\overline{S\otimes I}$, $\overline{H\otimes I}$, $\overline{I\otimes T}$, $\overline{I\otimes S}$, $\overline{I\otimes H}$, and $\overline{CZ}$ can be generated by $\G$.

    First, $\overline{T\otimes I} = \g[1,2]$ and $\overline{I\otimes T} = \g[4,5]$ and consequently $\overline{S\otimes I} = \g[1,2]^2$ and $\overline{I\otimes S} = \g[4,5]^2$. Next, $\overline{H\otimes I} = \g[1,3]^2\g[2,3]^4$ and $\overline{I\otimes H} = \g[4,6]^2M_[5,6]^4$. Finally, $\overline{CZ} = \g[1,2]\g[3,6]\g[4,5]$.
\end{proof}

We require one more technical lemma before our final result.

\begin{lemma}\label{lem:tech}
    Let $P\in\text{SO}(6)$ be a signed permutation matrix and let $\sigma_P:[6]\rightarrow \Z$ with $\sigma(i) = j$ if $Pe_i = e_j$ and $\sigma(i) = -j$ if $Pe_i = -e_j$.
    \begin{itemize}
        \item[1.] Let $C\in\cliff$, Then \[
        \overline{C}\g[i,j] = \begin{cases}
        \g[|\sigma_{\overline{C}}(i)|,|\sigma_{\overline{C}}(j)|] \overline{C} & \text{if }\sigma_{\overline{C}}(i)\sigma_{\overline{C}}(j) > 0\\
        \g[|\sigma_{\overline{C}}(j)|,|\sigma_{\overline{C}}(i)|]\overline{C} & \text{if }\sigma_{\overline{C}}(i)\sigma_{\overline{C}}(j) < 0
        \end{cases}
        \]
        \item[2.] If $U\in\tcount[k]$, then $\overline{U}$ can be written using a signed permutation matrix and $k$ operators from $\G$.
    \end{itemize}
\end{lemma}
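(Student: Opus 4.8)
The plan is to prove part~1 as a clean commutation identity and then use it as a rewriting rule to get part~2; part~1 is a short computation, and essentially all of the content is in recognizing that part~2 reduces to repeatedly applying part~1.

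For part~1, I would argue from the geometry of the generators. By the description following the definition, $\g[i,j]$ fixes every $e_k$ with $k\notin\{i,j\}$ and acts on $\operatorname{span}\{e_i,e_j\}$ by $e_i\mapsto\tfrac{1}{\sqrt2}(e_i+e_j)$, $e_j\mapsto\tfrac{1}{\sqrt2}(-e_i+e_j)$; reversing the index order transposes this $2\times2$ block, so $\g[j,i]=\g[i,j]^{-1}$. Since $\overline{C}\in\overline{\cliff}=\operatorname{SO}_6(\{-1,0,1\})$, write $\overline{C}e_i=\operatorname{sgn}(\sigma_{\overline{C}}(i))\,e_{|\sigma_{\overline{C}}(i)|}$ and factor $\overline{C}=\Pi D$, where $\Pi$ is the underlying permutation matrix and $D$ the diagonal sign matrix. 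Conjugating $\g[i,j]$ by $D$ leaves its $2\times2$ block unchanged when $\sigma_{\overline C}(i)\sigma_{\overline C}(j)>0$ and inverts the rotation when $\sigma_{\overline C}(i)\sigma_{\overline C}(j)<0$ (a one-line $2\times2$ check: $\operatorname{diag}(a,b)\,R\,\operatorname{diag}(a,b)$ equals $R$ if $ab=1$ and $R^{-1}$ if $ab=-1$), while conjugating by $\Pi$ merely relocates the support to $\{|\sigma_{\overline C}(i)|,|\sigma_{\overline C}(j)|\}$ preserving orientation. Combining these, $\overline{C}\,\g[i,j]\,\overline{C}^{-1}$ is $\g[|\sigma_{\overline{C}}(i)|,|\sigma_{\overline{C}}(j)|]$ when $\sigma_{\overline{C}}(i)\sigma_{\overline{C}}(j)>0$ and $\g[|\sigma_{\overline{C}}(j)|,|\sigma_{\overline{C}}(i)|]$ when $\sigma_{\overline{C}}(i)\sigma_{\overline{C}}(j)<0$; right-multiplying by $\overline{C}$ yields the stated identity. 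This step is a finite computation and should pose no difficulty.

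For part~2, I would start from a $T$-optimal circuit for $U$. Over the generators $\{H,S,CZ,T\}$, such a circuit realizes the $T$-count with exactly $k$ single-qubit non-Clifford gates, each of which may be taken to be $T$ or $T^\dagger$ on qubit $1$ or $2$ (any remaining odd power of $T$ being absorbed into an adjacent Clifford). Grouping maximal Clifford runs gives $U=C_k\,\tau_k\,C_{k-1}\,\tau_{k-1}\cdots\tau_1\,C_0$ with $C_\ell\in\cliff$ and each $\tau_\ell\in\{T^{\pm1}\otimes I,\ I\otimes T^{\pm1}\}$. Applying $\overline{(\cdot)}$ and using that it is multiplicative up to an overall sign (Theorem~\ref{thm:except}), which is harmless since our equivalence ignores global phase and all computation takes place inside $\overline{\clifft}$, gives $\overline{U}=\pm\,\overline{C_k}\,\overline{\tau_k}\cdots\overline{\tau_1}\,\overline{C_0}$, where every $\overline{C_\ell}\in\operatorname{SO}_6(\{-1,0,1\})$ is a signed permutation matrix and every $\overline{\tau_\ell}$ lies in $\{\g[1,2],\g[2,1],\g[4,5],\g[5,4]\}\subset\G$ (using $\overline{T^{-1}\otimes I}=\g[1,2]^{-1}=\g[2,1]$, and similarly on the second qubit).

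Finally, I would sweep all the signed permutation factors to one end. Each application of part~1 in the form $\overline{C}\,\g[i,j]=\g[i',j']\,\overline{C}$ moves a permutation factor past one generator, replacing it with another element of $\G$ and leaving the number of generators unchanged; iterating from one side, and using that a product of signed permutation matrices is again one (they form the group $\overline{\cliff}$), collapses $\overline{C_k},\dots,\overline{C_0}$ together with the overall sign into a single $P\in\operatorname{SO}_6(\{-1,0,1\})$ and leaves exactly $k$ factors from $\G$. Hence $\overline{U}=P\,\g[i_1,j_1]\cdots\g[i_k,j_k]$ (or the mirror form with $P$ on the right), which is the claim. The only places that require care are the normal-form reduction of the initial circuit, which is standard for two-qubit Clifford+$T$ because $T$ is the sole non-Clifford generator and acts on a single qubit, and the sign and index bookkeeping when part~1 is applied repeatedly; neither is conceptually hard, so the substance is entirely packaged into part~1 and this elementary commutation sweep.
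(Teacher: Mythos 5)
Your proof is correct and follows essentially the same route as the paper: part~1 by explicit computation (your $\Pi D$ factorization just spells out what the paper calls ``direct computation''), and part~2 by writing a $T$-optimal circuit as Cliffords interleaved with $T$-type gates, applying the bar map, and using part~1 to sweep all signed permutation factors to one side. The only differences are cosmetic---you keep $T^{\pm1}$ on either qubit as $\g[1,2],\g[2,1],\g[4,5],\g[5,4]$ rather than normalizing everything to $T\otimes I$, and you make explicit the sign ambiguity of the bar map on products, which the paper leaves implicit.
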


\begin{proof}
    1. can be proven by direct computation.

    Let $U\in\tcount[k]$. Then $U = C_1(T\otimes I)C_2(T\otimes I)C_3\dots C_k (T\otimes I) C_{k+1}$ for some $C_i\in\cliff$ and \[
    \overline{U} = \overline{{C_1}{(T\otimes I)}{C_2}\dots {C_k} {(T\otimes I)} {C_{k+1}}} = P_1\g[1,2]P_2\dots P_k \g[1,2] P_{k+1}
    \] where each $P_i$ is a signed permutation matrix. Using the commutation rules of 1. we can shift each $P_i$ to the right side of the equation and finally \[
        \overline{U} = \g[i_1,j_1] \g[i_2,j_2] \dots \g[i_k,j_k] P.
    \]
\end{proof}

To end this section we present an equivalence relation on $\clifft$ and show that a single representative of each class from $\tcount[k]$ can be used to compute a representative for each class of $\tcount[k+1]$.

\begin{theorem}\label{thm:equiv}
    Let $U,V\in\clifft$. Define an equivalence relation $\sim$ on $\clifft$ where $U\sim V$ if and only if $C_1 U = V C_2$ for some $C_1,C_2\in \cliff$. The equivalence relation $\sim$ induces an equivalence relation on $\overline{\clifft}$. Let $\widetilde{\tcount[k]}\subset \tcount[k]$ be a set containing at least one representative for each equivalence class of $\tcount[k]$. If $S\in\tcount[k+1]$, then $\overline{S}$ is equivalent to at least one element of $\{\overline{T}\g[i,j]\mid1\leq i,j\leq 6,T\in\widetilde{\tcount[k]}\}$.
\end{theorem}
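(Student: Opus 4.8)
The plan is to combine the structural decomposition of \cref{lem:tech} with a short minimality argument, showing that any optimal length-$(k+1)$ word over $\G$ has, up to Clifford multiplication, an optimal length-$k$ prefix. First I would settle that $\sim$ descends to $\overline{\clifft}$: on $\overline{\clifft}$ define $A\approx B$ to mean $P_1 A=B P_2$ for some determinant-one signed permutation matrices $P_1,P_2$. Since $\overline{\cliff}=\SO(\{-1,0,1\})$ is a group, $\approx$ is reflexive ($P_1=P_2=I$), symmetric (left-/right-multiply by $P_1^{-1}$ and $P_2^{-1}$), and transitive (compose), hence an equivalence relation. It is induced by $\sim$: since $\overline{(\cdot)}$ is a homomorphism, $C_1 U=V C_2$ gives $\overline{C_1}\,\overline{U}=\overline{V}\,\overline{C_2}$ with $\overline{C_1},\overline{C_2}\in\overline{\cliff}$; conversely, writing $P_1=\overline{C_1}$ and $P_2=\overline{C_2}$, the identity $P_1\overline{U}=\overline{V}P_2$ lifts to $C_1 U=\xi\,V C_2$ with $\xi$ a central scalar lying in $\cliff$, which we absorb into $C_2$ to recover $U\sim V$. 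This $\approx$ is the equivalence meant in the statement.

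Now let $S\in\tcount[k+1]$. By \cref{lem:tech}(2), $\overline{S}=\g[i_1,j_1]\g[i_2,j_2]\cdots\g[i_{k+1},j_{k+1}]\,P$ for some signed permutation $P$, and $i_\ell\neq j_\ell$ for every $\ell$ (a signed permutation acts injectively on $[6]$, so the commutation identity of \cref{lem:tech}(1) used to produce this form never collapses the two indices of a $\g[\cdot,\cdot]$). Set $A=\g[i_1,j_1]\cdots\g[i_k,j_k]$. Each $\g[i,j]$ equals $\overline{C^\dagger(T\otimes I)C}$ for some $C\in\cliff$, so there is $U_A\in\clifft$ with $\overline{U_A}=A$, realized by a Clifford+$T$ circuit using $k$ $T$-gates; hence the optimal $T$-count of $U_A$ is at most $k$. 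If it were at most $k-1$, then lifting $\overline{S}=A\,\g[i_{k+1},j_{k+1}]\,P$ --- in which the trailing generator lifts to one Clifford-conjugated $T$ and $P$ lifts to a Clifford, all discrepancies being central Clifford scalars --- would express $S$ with at most $k$ $T$-gates, contradicting $S\in\tcount[k+1]$. Therefore $U_A\in\tcount[k]$.

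It remains to extract the desired equivalent element. Choose $T'\in\widetilde{\tcount[k]}$ with $U_A\sim T'$, i.e.\ $P_1 A=\overline{T'}P_2$ for signed permutations $P_1,P_2$; then $A=P_1^{-1}\overline{T'}P_2$ and $\overline{S}=P_1^{-1}\overline{T'}\,P_2\,\g[i_{k+1},j_{k+1}]\,P$. Since $P_2\in\overline{\cliff}$, \cref{lem:tech}(1) rewrites $P_2\,\g[i_{k+1},j_{k+1}]=\g[i',j']\,P_2$ with $i'\neq j'$, whence $\overline{S}=P_1^{-1}\bigl(\overline{T'}\,\g[i',j']\bigr)(P_2P)$. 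As $P_1^{-1}$ and $P_2P$ are determinant-one signed permutation matrices, this exhibits $\overline{S}\approx\overline{T'}\,\g[i',j']$, and $\overline{T'}\,\g[i',j']$ lies in $\{\overline{T}\,\g[i,j]\mid 1\le i,j\le 6,\ T\in\widetilde{\tcount[k]}\}$, as claimed.

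I expect the delicate step to be the minimality claim: passing between $\clifft$ and its image under $\overline{(\cdot)}$ in $\SO$ must neither destroy nor manufacture $T$-gates, so that ``$A$ is a product of $k$ operators from $\G$'' genuinely pins the optimal $T$-count of every lift of $A$ to at most $k$ (and, with $S\in\tcount[k+1]$, to exactly $k$). This rests only on $\ker\overline{(\cdot)}$ being central and contained in $\cliff$, but it is the one place where one cannot be cavalier; everything else is routine manipulation with the commutation rule \cref{lem:tech}(1) and the group $\overline{\cliff}$.
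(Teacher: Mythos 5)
Your proof is correct and follows essentially the same route as the paper's: use \cref{lem:tech}(2) to write $\overline{S}$ (up to signed permutations) as a word of $k+1$ generators, swap the length-$k$ prefix for a representative in $\widetilde{\tcount[k]}$, and push the intervening Clifford past the last generator with the commutation rule of \cref{lem:tech}(1). The only difference is that you explicitly justify two steps the paper leaves implicit — that $\sim$ descends to $\overline{\clifft}$ via the central kernel, and that the length-$k$ prefix genuinely lies in $\tcount[k]$ by a minimality argument — which strengthens rather than changes the argument.
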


\begin{proof}
    Let $S\in\tcount[k+1]$. Then $\overline{S} \sim \g[i_1,j_1]\g[i_2,j_2]\dots\g[i_{k+1},j_{k+1}]$ for $\g[i_m,j_m]\in\G$ by \cref{lem:tech}. By hypothesis, $\g[i_1,j_1] \g[i_2,j_2] \dots \g[i_k,j_k] \sim \overline{T}$ for some $ T\in\widetilde{\tcount[k]}$. Let $\overline{T} = \overline{D_1} \g[p_1,q_1] \g[p_2,q_2] \dots \g[p_k,q_k] \overline{D2}$ for some $D_1,D_2\in\cliff$. So $\overline{S} \sim \overline{D_1}\g[p_1,q_1] \g[p_2,q_2] \dots \g[p_k,q_k]\overline{D_2} \g[i_{k+1},j_{k+1}]$. By \cref{lem:tech}, \[
    \overline{D_1}\g[p_1,q_1] \g[p_2,q_2] \dots \g[p_k,q_k] \overline{D_2} \g[i_{k+1},j_{k+1}] = \overline{D_1}\g[p_1,q_1] \g[p_2,q_2] \dots \g[p_k,q_k] \g[p_{k+1},q_{k+1}]\overline{D_2}
    \] for some $p_{k+1},q_{k+1}$. 
    Therefore, $\overline{S}\sim \overline{T}\g[p_{k+1},q_{k+1}]$.
\end{proof}

\Cref{thm:equiv} allows us to compute a representative for each equivalence class of Clifford-$T$ circuits of $T$-count $k+1$ from a set of single representatives for each equivalence class of $T$-count $k$, significantly reducing the space required to store optimal circuits.

\subsection{An involutive generating set for representatives}
\label{sub:involutive_generators}

Recall that we work inside the matrix group
$\mathcal{J}_2 \subset \SO$ and that our set of
$T$-step generators are
\[
G=\{G_{i,j} : 1\le i\ne j\le 6\},
\qquad
G_{i,j}=P(i,j)^\dagger \,\overline{(T\otimes I)}\,P(i,j),
\]
where $P(i,j)$ is a signed permutation.
By \Cref{lemma:generators}, this set generates
$\mathcal{J}_2$.

Our goal is only to find a representative subgroup and, as such, we do not need \emph{all} generators in $G$
as explicit step operators. Additionally, our equivalence classes extend beyond $SO(6)$ and into $O(6)$. The proofs of the previous subsection do not depend specifically on the restriction to \SO and, hence, we proceed by relaxing this constraint in order to consider an involutive set of representative generators.

The recursion in \Cref{thm:equiv}
only requires that, from one representative per equivalence class at $T$-count $k$,
we can generate a representative per class at $T$-count $k+1$ by multiplying by a
single $T$-step. This subsection shows that we can
choose those $T$-steps from a simpler set in which every generator is an
\emph{involution}, ultimately generating a smaller group. This matters both theoretically (it clarifies the bipartite
structure exploited by the algorithms) and practically (it simplifies compiling).

For each ordered pair $i\ne j$, let $C_{i,j}$ be the permutation matrix that swaps rows $i$ and $j$. Then, we know by \Cref{thm:equiv} that $C_{i,j}G_{i,j} \sim G_{i,j}$ and, moreover, that the set $\mathcal{G} = \{C_{i,j}G_{i,j} \mid 1\leq i < j \leq 6\}$ is sufficient to generate a representative for every equivalence class in $\mathbb{G}$. Although we claim this is obvious, we formalize it in the following theorem.
\begin{theorem}\label{thm:involution}
    Every element $g \in \clifft$ is equivalent to a product of involutions.
\end{theorem}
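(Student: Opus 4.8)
The plan is to combine the commutation identity in Lemma~\ref{lem:tech}(1) with the factorization in Lemma~\ref{lem:tech}(2), but working in $\mathrm{O}(6)$ rather than $\SO$ so that signed permutation matrices may be replaced by unsigned ones. First I would observe that each generator $G_{i,j}$, being a conjugate of $\overline{T\otimes I}$ by a signed permutation, satisfies $C_{i,j}G_{i,j}\sim G_{i,j}$ where $C_{i,j}$ is the (unsigned) transposition matrix swapping coordinates $i$ and $j$; indeed $C_{i,j}$ is itself a (generalized) Clifford-type element of $\mathrm{O}(6)$, so left-multiplication by it stays within the equivalence class. The point is that $C_{i,j}G_{i,j}$ is a genuine involution: it is an orthogonal matrix whose square one checks by direct computation is the identity (the $T$-rotation block gets conjugated into a reflection). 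So the set $\mathcal{G}=\{C_{i,j}G_{i,j}\mid 1\le i<j\le 6\}$ consists entirely of involutions.

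Next I would run the recursion of Theorem~\ref{thm:equiv} with $\mathcal{G}$ in place of $G$. Given $g\in\clifft$ with optimal $T$-count $k$, Lemma~\ref{lem:tech}(2) gives $\overline{g}=G_{i_1,j_1}\cdots G_{i_k,j_k}\,P$ for some signed permutation $P$. Now I would rewrite each $G_{i_m,j_m}$ as $C_{i_m,j_m}\,(C_{i_m,j_m}G_{i_m,j_m})=C_{i_m,j_m}\,g_m$ where $g_m\in\mathcal{G}$ is an involution, and then use part (1) of Lemma~\ref{lem:tech} — or rather its obvious analogue for the unsigned permutations $C_{i_m,j_m}$, which permute generators among themselves — to sweep all the permutation factors (both the $C_{i_m,j_m}$ and the trailing $P$) to one side. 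This leaves $\overline{g}=g_1'g_2'\cdots g_k'\,Q$ with each $g_m'\in\mathcal{G}$ an involution and $Q$ a single (signed) permutation matrix, hence $g\sim g_1'\cdots g_k'$ up to the residual permutation, and $Q$ is itself a product of transpositions, i.e. a product of involutions. Combining, $\overline{g}$ is a product of involutions, and since the kernel of $\Phi$ is central and Clifford, this lifts to a statement about $g\in\clifft$ itself.

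The main obstacle I expect is bookkeeping rather than conceptual: I need a clean analogue of Lemma~\ref{lem:tech}(1) that tracks how left-multiplication by an \emph{unsigned} permutation matrix $C_{i,j}$ moves past a generator $G_{p,q}$ and produces another generator times a permutation, and I must be careful that the signs do not reappear in a way that breaks the involution property of the rewritten factors. Concretely, one should verify that conjugating $\overline{T\otimes I}$ by an unsigned transposition still lands in $\mathcal{G}$ (possibly after a further harmless sign adjustment absorbed into the trailing permutation), so that the rewriting $G_{i,j}=C_{i,j}g_{i,j}$ remains consistent under the sweep. A second, minor subtlety is justifying that a signed permutation matrix of determinant $\pm1$ in $\mathrm{O}(6)$ is a product of involutions — but transpositions and single-coordinate sign flips are each involutions, so this is immediate. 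I would present the first point as a short direct-computation lemma and then the theorem follows by the same inductive sweep already used to prove Theorem~\ref{thm:equiv}.
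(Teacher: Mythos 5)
Your proposal is correct and takes essentially the same route as the paper: replace each generator $G_{i,j}$ by the involution $C_{i,j}G_{i,j}$ (equivalent because $C_{i,j}$ is a Clifford-type permutation) and invoke the recursion of \Cref{thm:equiv} to conclude that the involutive generators still reach every equivalence class. Your additional bookkeeping—explicitly sweeping the permutation factors to one side and observing that the residual signed permutation is itself a product of involutions—is just a more detailed version of the paper's one-line appeal to \Cref{thm:equiv}.
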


\begin{proof}
    It suffices to show that $\G \sim \G'$ where $\G'$ contains only involutions. To see this, note that for each $G_{i,j} \in \G$, $G_{i,j}\sim C_{i,j} G_{i,j}$ and $C_{i,j}G_{i,j}$ can be easily seen to be an involution. Hence, the set $\G' = \{C_{i,j}G_{i,j} \mid G_{i,j} \in \G\}$ contains only involutions. Moreover, by \Cref{thm:equiv}, we know that for all $\tcount[k]$, $\tcount[k]C_{i,j} \sim \tcount[k]$. Hence, it follows immediately from \Cref{thm:equiv} that $\tcount[k]\G' \sim \tcount[k]\G$, completing the proof.
\end{proof}

Although \Cref{thm:involution} in principle dramatically reduces the search space, its application is currently restricted to accelerating our codebase. Involutions are, after all, easier to work with. Future work might exploit the involutive property to discover better canonical forms or algorithms; for present purposes we only use this property practically.

\section{Optimal Exact Synthesis Algorithm}\label{sec:algorithms}

In this section we give pseudocode for the breadth-first enumeration that drives
LUT generation. Using the \SO representation from
Section~\ref{sub:so6}, the problem reduces to exploring the Cayley graph of
$\overline{\mathcal{J}_2}$ generated by $G$, modulo our canonicalization
procedure. The loop structure below is mildly inelegant, but it parallelizes
cleanly.

\algnewcommand{\LineComment}[1]{%
  \Statex \noindent \textcolor{gray}{\texttt{//}~#1}
}

\begin{algorithm}
\caption{Generate the lookup table up to T-distance $k$}
\label{alg:LUT}
\begin{algorithmic}[1]
\Function{GenerateLUT}{$r \in \SO$, $k\in\mathbb{Z}_{\ge 0}$}
    \State $\textproc{LUT} \gets \Call{InitLUT}{\{r\}}$
    \For{$i=0$ to $k-1$}
        \State \Call{ExtendOneStep}{$\textproc{LUT}$}
    \EndFor
    \State \Return $\textproc{LUT}$
\EndFunction

\vspace{0.8em}

\LineComment{\textit{\textproc{LUT.back} and \textproc{LUT.previous} refer to the final and penultimate elements of \textproc{LUT} respectively}}
\LineComment{\textit{\textproc{next} is a set of representatives}} \LineComment{\textit{\textproc{next.insert} inserts only when no equivalent member exists}}
\Function{ExtendOneStep}{$\textproc{LUT}$}
    \State $\textproc{next} \gets \emptyset$ 
    \ForAll{$U \in \textproc{LUT.back}$}
        \ForAll{$g \in G$} 
            \State $n \gets g\cdot U$
            \If{$n \notin \textproc{LUT.previous}$}\label{line:bipartite-start} \Call{next.insert}{$n$} \label{line:bipartite-end}  
            \EndIf 
        \EndFor
    \EndFor
    \State $\Call{LUT.push\_back}{next}$
\EndFunction

\vspace{0.8em}

\Function{InitLUT}{$r\in \SO$}
    \State Declare a vector $\textproc{LUT}$
    \State \Call{LUT.push\_back}{$\{r\}$} 
    \State \Return $\textproc{LUT}$
\EndFunction

\end{algorithmic}
\end{algorithm}

\begin{algorithm}
\caption{Meet-in-the-middle search \label{alg:mitm}}
\begin{algorithmic}[1]

\Function{MITM}{$r_L \in \SO, r_R \in \SO$}
    \If{$r_L = r_R$} \Return $e$ \EndIf
    \State $\textproc{left} \gets$ \Call{InitLUT}{$r_L$}
    \State $\textproc{right} \gets$ \Call{InitLUT}{$r_R$}
    \State $\textproc{meet} \gets \bot$
    \While{$\textproc{meet} = \bot$}    
        \If{$\abs{\textproc{left.back}} < \abs{\textproc{right.back}}$}
            \State $\textproc{meet} \gets \Call{ExtendOneStep\_MITM}{\textproc{left},\textproc{right}}$
        \Else
            \State $\textproc{meet} \gets \Call{ExtendOneStep\_MITM}{\textproc{right},\textproc{left}}$
        \EndIf
    \EndWhile
    \State $m_L \gets \Call{left.find}{\textproc{meet}}$ \Comment{$l \sim \textproc{meet}$}
    \State $m_R \gets \Call{right.find}{\textproc{meet}}$ \Comment{$r \sim \textproc{meet}$}
    \State Find $c_0,c_1 \in \cliff$ such that $c_1 m_L r_L c_0 = m_R r_R$. \Comment{Do however. Brute force is fine here.}
    \State \Return the word $m_R^{-1} c_1 m_L r_L$
\EndFunction
\vspace{0.8em}

\Function{ExtendOneStep\_MITM}{\textproc{to\_extend},\textproc{to\_check}}
    \State $\textproc{next} \gets \emptyset$
    \ForAll{$U \in \textproc{to\_extend.back}$}
        \ForAll{$g \in G$}
            \State $n \gets g\cdot U$
            \If{$n \notin \textproc{to\_extend.back}$}
                \State $\Call{next.insert}{n}$
                \If{$n \in \textproc{to\_check.back}$} \Return $n$ \EndIf
            \EndIf
        \EndFor
    \EndFor
    \State \Return $\bot$
\EndFunction

\end{algorithmic}
\end{algorithm}

The following theorems will justify \Cref{line:bipartite-start} of \Cref{alg:LUT}. 

\begin{theorem}[Bian and Selinger \cite{bian2022generators}]
\label{thm:evenT_presentation}
Let $\mathcal{J}_2$ denote the ancilla-free two-qubit Clifford+$T$ operator group.
Bian and Selinger give an explicit presentation
\[
\mathcal{J}_2 \cong \langle X \mid \Gamma\rangle,
\qquad
X=\{\omega,H_0,H_1,S_0,S_1,T_0,T_1,CZ\},
\]
where the defining relations $\Gamma$ are listed in Figure~1 of their paper
(Theorem~2.1 in \cite{bian2022generators}).

Moreover, every defining relator in $\Gamma$ contains an \emph{even} total number
of occurrences of the $T$-type generators $T_0$ and $T_1$.
\end{theorem}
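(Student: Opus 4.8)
The plan is to reduce the ``moreover'' clause to the construction of one homomorphism, and then to build that homomorphism from the $\text{SO}_6$ picture of \Cref{sub:so6}. Write $F(X)$ for the free group on $X$ and let $p\colon F(X)\to\Z/2\Z$ be the homomorphism sending $T_0,T_1\mapsto 1$ and $\omega,H_0,H_1,S_0,S_1,CZ\mapsto 0$, so that $p(w)$ equals the total number of $T$-type letters in $w$, taken mod $2$. The claim that every relator in $\Gamma$ has even $T$-count is exactly the claim that $p$ vanishes on $\Gamma$, i.e. that $p$ factors through $\mathcal{J}_2=\langle X\mid\Gamma\rangle$ to give a homomorphism $\tau\colon\mathcal{J}_2\to\Z/2\Z$ with $\tau(T_0)=\tau(T_1)=1$ and $\tau$ trivial on the remaining generators. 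So I would not touch the relator list at all and instead exhibit such a $\tau$ directly on the matrix group.

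To construct $\tau$, I would work in $\overline{\clifft}=\text{SO}_6(\Z[1/\sqrt2])$ and use the spinor norm of the quadratic form $q(x)=\sum_i x_i^2$. This is legitimate because $2=(\sqrt2)^2$ is a unit of $\Z[1/\sqrt2]$: each generator $\g[i,j]$ is an explicit product of two reflections $\tau_v$ with $q(v)$ a unit, so by \Cref{lemma:generators} every element of $\overline{\clifft}$ is a product of reflections, and sending $\tau_{v_1}\cdots\tau_{v_{2m}}\mapsto\prod_\ell q(v_\ell)$ gives a well-defined homomorphism $\theta\colon\overline{\clifft}\to\Z[1/\sqrt2]^\times/(\Z[1/\sqrt2]^\times)^2$ (well-definedness follows by base change to $\mathbb{Q}(\sqrt2)$, where the spinor norm is classical). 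Composing with the $\sqrt2$-adic valuation mod $2$ --- which makes sense since squares have even valuation --- yields a homomorphism $\theta_{\sqrt2}\colon\overline{\clifft}\to\Z/2\Z$. Because $\overline{(\cdot)}$ is multiplicative up to the $\{1,i,-1,-i\}$ phase ambiguity recorded in \Cref{thm:except} (see \Cref{sub:so6}), and $\theta_{\sqrt2}(\pm I)=0$, the composite $\tau:=\theta_{\sqrt2}\circ\overline{(\cdot)}\colon\mathcal{J}_2\to\Z/2\Z$ is a genuine homomorphism; it then remains only to evaluate it on generators.

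For the Clifford generators this is immediate: $\overline{H_i},\overline{S_i},\overline{CZ}$ are signed permutation matrices (\Cref{sub:so6}), hence products of coordinate sign flips $\tau_{e_i}$ (with $q=1$) and transpositions $\tau_{e_i-e_j}$ (with $q=2$), so their spinor norms are powers of $2$ and $\tau$ kills them; likewise $\overline{\omega I}\in\{\pm I\}$, so $\tau(\omega)=0$. For the $T$-generators I would use that $\overline{T_0}=\g[1,2]$ is a rotation by $\pi/4$ in the coordinate plane $\Span(e_1,e_2)$ and factor it as $\g[1,2]=\tau_u\tau_{e_1}$ with $u=(1,\sqrt2-1,0,0,0,0)$; then $q(u)=4-2\sqrt2=(\sqrt2)^3(\sqrt2-1)$ has $\sqrt2$-adic valuation $3$, so $\theta_{\sqrt2}(\overline{T_0})=1$, and since the spinor norm is conjugation-invariant every $\g[i,j]$ --- in particular $\overline{T_1}=\g[4,5]$ --- gives the same value. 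Thus $\tau(T_0)=\tau(T_1)=1$, which produces the desired $\tau$ and hence the theorem.

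The delicate part is choosing the right invariant rather than the computations: the naive candidate, the least denominator exponent of $\overline{U}$ over $\Z[1/\sqrt2]$, is \emph{not} a $T$-parity invariant (e.g. $\overline{T_0 T_1}$ has least denominator exponent $1$ but even $T$-count), so one genuinely needs the spinor-norm refinement, together with some care about well-definedness of $\theta$ over the ring $\Z[1/\sqrt2]$ rather than a field and about the $\{\pm I\}$ phase ambiguity of $\overline{(\cdot)}$. If one prefers a fully elementary route, the statement can instead be verified by direct inspection of the finitely many relators in Figure~1 of \cite{bian2022generators}, each of which visibly contains an even number of $T_0,T_1$ occurrences.
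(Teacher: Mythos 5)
Your proposal is correct, but it reaches the statement by a genuinely different route than the paper. The paper offers no independent argument here: the theorem is a citation of Bian--Selinger, and the ``moreover'' clause is exactly the elementary finite check you relegate to your closing sentence---inspect the relator list in their Figure~1 and count $T_0,T_1$ occurrences. You instead build the parity invariant directly on the matrix group: the spinor norm of the standard quadratic form on $\mathrm{SO}_6(\Z[1/\sqrt2])$, composed with the $\sqrt2$-adic valuation mod $2$, vanishes on signed permutations (the Clifford images) and on $\pm I$, while your factorization $G_{1,2}=\tau_u\tau_{e_1}$ with $u=(1,\sqrt2-1,0,0,0,0)$ and $q(u)=4-2\sqrt2$ of odd valuation is correct (the product reproduces exactly the upper-left block of $\overline{T\otimes I}$), and conjugation invariance extends the value $1$ to every $G_{i,j}$, hence to $\overline{T_1}$. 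This produces a homomorphism $\mathcal{J}_2\to\Z/2\Z$ sending $T_0,T_1\mapsto 1$ and the remaining generators to $0$, which forces every relator to have even $T$-count because relators represent the identity; only the soundness half of the cited presentation is used, so there is no circularity, and your handling of the global-phase ambiguity of $\overline{(\cdot)}$ via $\theta_{\sqrt2}(\pm I)=0$ is the right fix. What your route buys is independence from the particular relator list and, in effect, a direct proof of the well-defined $T$-parity map that the paper only derives downstream in \Cref{thm:parity_flip_no_equal} from the even-relator property; what it costs is reliance on classical spinor-norm theory over $\mathbb{Q}(\sqrt2)$ (plus care about the ring-versus-field issue you correctly dispatch by base change), whereas the literal statement needs nothing more than the finite inspection the paper implicitly relies on.
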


More useful to our current purposes and following from \Cref{thm:evenT_presentation} is the following theorem which formally justifies \Cref{line:bipartite-start} of \Cref{alg:LUT}.

\begin{theorem}
\label{thm:parity_flip_no_equal}
Let $|\cdot|_T$ be the minimal number of $T$-type generators needed to represent an
element of $\mathcal{J}_2$ (Clifford generators have zero cost), and let $t\in\{T_0,T_1\}$.
Then for every $u\in\mathcal{J}_2$,
\[
\bigl||tu|_T-|u|_T\bigr|=1.
\]
The same conclusion holds for our generator set $G$ (each $g\in G$ is a Clifford
conjugate of $T_0$ or $T_1$), i.e. for all $g\in G$ and $u\in\mathcal{J}_2$,
\[
|gu|_T \neq |u|_T,
\qquad\text{and hence}\qquad
\bigl||gu|_T-|u|_T\bigr|=1.
\]
\end{theorem}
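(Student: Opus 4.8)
The plan is to extract a $\mathbb{Z}/2\mathbb{Z}$-valued parity invariant from the Bian--Selinger presentation and then pin the gap $\bigl||tu|_T-|u|_T\bigr|$ to exactly $1$ by combining that parity obstruction with an elementary sub-additivity bound. First I would define $\phi\colon\mathcal{J}_2\to\mathbb{Z}/2\mathbb{Z}$ on the generating set $X$ by $\phi(T_0)=\phi(T_1)=1$ and $\phi(x)=0$ for the Clifford generators $x\in\{\omega,H_0,H_1,S_0,S_1,CZ\}$. This descends to a group homomorphism on $\mathcal{J}_2\cong\langle X\mid\Gamma\rangle$ precisely because, by \Cref{thm:evenT_presentation}, each relator in $\Gamma$ contains an even number of $T$-type letters, so it maps to $0\in\mathbb{Z}/2\mathbb{Z}$. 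Consequently, for any word $w$ over $X^{\pm1}$ representing $u$, the number of $T$-type letters $T_0^{\pm1},T_1^{\pm1}$ in $w$ is congruent to $\phi(u)\pmod 2$ (using $\phi(T_i^{-1})=\phi(T_i)$), and expanding arbitrary Clifford operators into $X$-words does not change this count; in particular $|u|_T\equiv\phi(u)\pmod 2$. Since $\phi(tu)=\phi(t)+\phi(u)=1+\phi(u)$, the quantities $|tu|_T$ and $|u|_T$ have opposite parity, hence $|tu|_T\neq|u|_T$ and $\bigl||tu|_T-|u|_T\bigr|\ge 1$.

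For the matching upper bound I would use the sub-additivity $|ab|_T\le|a|_T+|b|_T$ (concatenate $T$-optimal words) together with $|t|_T=|t^{-1}|_T=1$ for $t\in\{T_0,T_1\}$. Here $|t|_T\le1$ is immediate and $|t|_T\ne0$ because $T$ is not Clifford, while $T_i^{-1}=S_i^\dagger T_i$ (both operators act diagonally on qubit $i$, and $T_i^2=S_i$) shows $|t^{-1}|_T\le1$, again with $|t^{-1}|_T\ne0$. Then $|tu|_T\le|t|_T+|u|_T=|u|_T+1$, and applying sub-additivity to $u=t^{-1}(tu)$ gives $|u|_T\le|t^{-1}|_T+|tu|_T=|tu|_T+1$. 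Together with the parity obstruction this forces $|tu|_T\in\{|u|_T-1,|u|_T+1\}$, i.e. $\bigl||tu|_T-|u|_T\bigr|=1$.

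For the generator set $G$, each $g=G_{i,j}$ is a Clifford conjugate $g=C\,t\,C^{-1}$ with $t\in\{T_0,T_1\}$, and since left- or right-multiplication by a Clifford preserves $T$-count we have $|gu|_T=|C t C^{-1}u|_T=|t(C^{-1}u)|_T$ and $|u|_T=|C^{-1}u|_T$; setting $u'=C^{-1}u$ and invoking the statement just proved yields $\bigl||gu|_T-|u|_T\bigr|=\bigl||tu'|_T-|u'|_T\bigr|=1$. The main obstacle I anticipate is not conceptual but bookkeeping: one must be careful that the parity argument counts $T^{\pm1}$ occurrences rather than only positive ones, and one genuinely needs the small identity $T^{-1}=S^\dagger T$ to secure the lower bound $|tu|_T\ge|u|_T-1$, since the parity invariant alone rules out equality but does not by itself bound the gap. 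Everything else reduces to citing \Cref{thm:evenT_presentation} for well-definedness of $\phi$ and to the trivial sub-additivity of $|\cdot|_T$.
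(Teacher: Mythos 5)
Your proposal is correct and follows essentially the same route as the paper's proof: a $\mathbb{Z}_2$ parity invariant extracted from the even-$T$-count relators of the Bian--Selinger presentation rules out $|tu|_T=|u|_T$, and sub-additivity of $|\cdot|_T$ (using that $t^{-1}$ also has $T$-cost one) pins the gap to exactly $1$. If anything, you are slightly more explicit than the paper—packaging the parity as a homomorphism $\phi$, justifying $|t^{-1}|_T=1$ via $T^{-1}=S^\dagger T$, and spelling out the Clifford-conjugation reduction for $g\in G$—but these are refinements of the same argument, not a different one.
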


\begin{proof}
Suppose that $u = t v$ for some shortest word $v$. Then, $\abs{u}_T \leq \abs{v}_T + 1$ and $\abs{v}_T = \abs{t^{-1} u}_T \leq \abs{u}_T+1$. Hence, 
\[
    \abs{v}_T - 1 \leq \abs{u}_T \leq \abs{v}_T + 1.
\]

Therefore, we only need to demonstrate that $\abs{tu}_T \neq \abs{u}_T$. To do this, work in the presentation $\langle X\mid\Gamma\rangle$ from
Theorem~\ref{thm:evenT_presentation}. For a word $w$ over $X$, define its
parity
\[
\pi(w) := \abs{w}_T \bmod 2 \in \mathbb{Z}_2.
\]
Here $w$ ranges over words in the free monoid on $X$; we show below that $\pi$
depends only on the group element represented by $w$.

By Theorem~\ref{thm:evenT_presentation}, every defining relator has even $T$-count. Thus, for any two words $w_0,w_1 \in \langle X \rangle$ equivalent to the same $g \in \langle X \mid \Gamma \rangle$ under the application of single relator or free reduction, we have that $\abs{\abs{w_0}_T - \abs{w_1}_T}$ is even. Hence, $\pi(w)$ is
invariant under all rewrites that preserve the represented group element.
So, $\pi$ descends to a well-defined map $\pi:\mathcal{J}_2\to\mathbb{Z}_2$.

Now let $u\in\mathcal{J}_2$ and choose any word $w$ representing $u$.
Then the concatenated word $tw$ represents $tu$ and satisfies
\[
\pi(tu)=\pi(tw)=\pi(t)+\pi(w)=1+\pi(u)\pmod 2,
\]
so $tu$ and $u$ have opposite parity. In particular, they cannot have the same
$T$-length, because every representative word for a fixed element has the same
$T$-parity, and thus the minimum $T$-count has that parity as well. Therefore
$|tu|_T\neq |u|_T$.
\end{proof}

\begin{theorem}[Correctness of Algorithm~\ref{alg:LUT}]
\label{thm:LUT_correct}
Fix $r\in \SO$ and let $\textproc{LUT}$ be the output of
\(\Call{GenerateLUT}{r,k}\) in Algorithm~\ref{alg:LUT}. Then for each
$0\le i\le k$, the set $\textproc{LUT}_i$ contains exactly one representative of
each equivalence class at $T$-distance $i$ from $r$ (and no representatives of
classes at any other $T$-distance).
\end{theorem}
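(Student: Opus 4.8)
The plan is to prove Theorem~\ref{thm:LUT_correct} by induction on $i$, establishing a loop invariant for \Call{ExtendOneStep}{}. The invariant states that after the $i$-th iteration, \textproc{LUT} has $i+1$ levels, and for each $0 \le m \le i$, the level $\textproc{LUT}_m$ contains exactly one representative of each equivalence class (under $\sim$ from \Cref{thm:equiv}) at $T$-distance exactly $m$ from $r$, and contains no representative of any class at any other $T$-distance. The base case $i=0$ is immediate: \Call{InitLUT}{} seeds $\textproc{LUT}_0 = \{r\}$, and $r$ is trivially the unique representative of its own class at $T$-distance $0$.

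For the inductive step, assume the invariant holds through level $i$ and analyze the call to \Call{ExtendOneStep}{}. First I would show that every class at $T$-distance $i+1$ is represented in \textproc{next}: given such a class with representative $S$, \Cref{thm:equiv} (applied with $k=i$, and using that $\textproc{LUT}_i$ is a valid $\widetilde{\tcount[i]}$-style set of representatives by the inductive hypothesis) gives that $\overline{S} \sim \overline{U} G_{p,q}$ for some $U \in \textproc{LUT}_i$ and some generator. Hence $n = G_{p,q} \cdot U$ is enumerated in the double loop and lies in the target class; the only thing to check is that it survives the filter on \Cref{line:bipartite-start}, i.e.\ that $n \notin \textproc{LUT}_{i-1}$. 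This is exactly where \Cref{thm:parity_flip_no_equal} enters: $n$ has $T$-distance $i+1$ from $r$ (it cannot be smaller, else $S$ would not be at distance $i+1$; it cannot be larger since $U$ is at distance $i$ and one generator changes distance by exactly $\pm 1$), and since $i+1 \ne i-1$, $n$ is not equivalent to anything in $\textproc{LUT}_{i-1}$ — more carefully, the parity argument shows $T$-distance $i+1$ and $i-1$ differ by $2$, but since a single generator shifts distance by exactly one and $U$ has distance $i$, the only possibilities for $n$ are distances $i-1$ and $i+1$; the case $i-1$ is precisely what the filter removes. So \Call{next.insert}{} is called with a genuine distance-$(i+1)$ representative. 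Conversely, I would show \textproc{next} contains nothing extraneous: any $n$ that passes the filter equals $G\cdot U$ with $U$ at distance $i$, so $n$ has distance $i-1$ or $i+1$; the filter rules out a class at distance $i-1$ (if $n$ were equivalent to some $v$ at distance $i-1$, then $v$ appears in $\textproc{LUT}_{i-1}$ by the inductive hypothesis, and since \textproc{next.insert} and the membership test are defined up to equivalence, $n$ would be rejected), leaving only distance $i+1$. Finally, \textproc{next.insert} deduplicates up to equivalence, so exactly one representative per class survives.

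The subtle point worth spelling out is why the $n \notin \textproc{LUT}_{i-1}$ test suffices rather than a test against all previous levels: by \Cref{thm:parity_flip_no_equal}, left-multiplying $U$ (distance $i$) by a generator lands at distance $i\pm1$ only, so no element of \textproc{next} can collide with levels $i-2, i-3, \dots$; those collisions are impossible by the distance-parity structure, not merely unchecked. I would also note that equivalence classes respect $T$-distance — if $U \sim V$ then $|U|_T = |V|_T$ relative to any fixed base point, because $\sim$ only multiplies by Clifford operators which have zero $T$-cost — so ``$T$-distance of a class'' is well-defined, which is implicitly needed throughout. Assembling the pieces: the invariant holds at level $i+1$, and after $k$ iterations of the \textbf{for} loop in \Call{GenerateLUT}{}, the invariant at $i=k$ is exactly the statement of the theorem.

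The main obstacle I anticipate is bookkeeping the interaction between the equivalence relation $\sim$ and the ``distance'' notion cleanly: I need \textproc{LUT.previous} to genuinely capture \emph{all} classes at distance $i-1$ (so that the filter is not accidentally letting a distance-$(i-1)$ element through because its equivalence class happens to be stored under a different representative), and this in turn leans on the \textproc{next.insert} semantics being ``insert iff no equivalent member exists'' and the membership test $n \notin \textproc{LUT.previous}$ likewise being up-to-equivalence. I would state these representation conventions explicitly at the top of the proof so the parity argument and the \Cref{thm:equiv} reachability argument dovetail without hand-waving. Everything else is a routine unwinding of the two nested loops against the inductive hypothesis.
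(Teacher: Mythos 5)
Your proposal is correct and follows essentially the same route as the paper's own proof: a loop invariant/induction on the layer index, using \Cref{thm:equiv} to guarantee every class at distance $i+1$ is reached from some stored representative at distance $i$, \Cref{thm:parity_flip_no_equal} to restrict candidates to distances $i\pm 1$ so the filter against $\textproc{LUT}_{i-1}$ leaves exactly the distance-$(i+1)$ elements, and the up-to-equivalence semantics of \textproc{next.insert} to retain one representative per class. Your explicit remarks that equivalence preserves $T$-distance and that the membership test must be interpreted up to equivalence are points the paper leaves implicit, but they do not constitute a different approach.
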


\begin{proof}
We argue by loop invariant on the number of calls to
\textproc{ExtendOneStep}.

\begin{loopinvariantproof}

\Invariant
After $\ell$ iterations, the table $\textproc{LUT}$ contains layers
$\textproc{LUT}_0,\dots,\textproc{LUT}_\ell$ such that:
\begin{enumerate}[label=(\roman*)]
\item $\textproc{LUT}_i$ contains exactly one representative of each equivalence
class at $T$-distance $i$ from $r$;
\item every element of $\textproc{LUT}_i$ has $T$-distance exactly $i$.
\end{enumerate}

\Initialization
Initially, $\textproc{LUT}_0=\{r\}$ and no other layers exist.
Since $r$ has $T$-distance $0$ from itself and represents its own equivalence
class, the invariant holds for $\ell=0$.

\Maintenance
Assume the invariant holds after $\ell\ge 0$ iterations.
During the next call to \textproc{ExtendOneStep}, candidates of the form
$n=gU$ are generated for $U\in\textproc{LUT}_\ell$ and $g\in G$.
By Theorem~\ref{thm:parity_flip_no_equal}, each such candidate satisfies
$|n|_T\in\{\ell-1,\ell+1\}$.
The filter against $\textproc{LUT}_{\ell-1}$ removes precisely those candidates
at distance $\ell-1$, so all remaining elements inserted into
$\textproc{next}$ lie at distance $\ell+1$.

Every equivalence class at distance $\ell+1$ admits a representative of the form
$gU$ with $U$ at distance $\ell$.
By the inductive hypothesis, such a $U$ appears in $\textproc{LUT}_\ell$, so
\Cref{thm:equiv} guarantees that at least one representative of each class at distance $\ell+1$ is considered.
The operation \textproc{next.insert} is therefore called on at least one representative of each equivalence class and \textproc{insert} retains at most one representative per equivalence. Hence, exactly one representative is inserted into \textproc{next} per equivalence class. After setting $\textproc{LUT}_{\ell+1} \gets \textproc{next}$, the invariant
holds for $\ell+1$.

\Termination
After $k$ iterations, the invariant implies that for each
$0\le i\le k$, $\textproc{LUT}_i$ contains exactly one representative of each
equivalence class at $T$-distance $i$ from $r$.
This is precisely the claimed output of Algorithm~\ref{alg:LUT}.

\end{loopinvariantproof}
\end{proof}

\begin{theorem}[Correctness of Algorithm~\ref{alg:mitm}]
\label{thm:alg2_correct}
Algorithm~\ref{alg:mitm} terminates and returns a word of minimal $T$-length
representing $r_R c r_L^{-1}$ for some $c\in\cliff$.
\end{theorem}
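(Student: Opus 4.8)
The plan is to prove two things separately: that \textsc{MITM} halts, and that the word it returns is a Clifford$+T$ word of the stated form whose $T$-count is minimal. Throughout I would pass to the quotient picture of \Cref{sub:so6}: working in $\SO$ via $\overline{(\cdot)}$, with $\overline{\clifft}=\SO_6(\Z[1/\sqrt2])$ and $\overline{\cliff}$ the determinant-one signed permutations, \Cref{lem:tech} lets me identify the minimal $T$-count of a word representing some $r_R c r_L^{-1}$ ($c\in\cliff$) with the distance $d:=\mathrm{dist}([r_L],[r_R])$ in the Cayley graph $\mathcal X$ whose vertices are $\sim$-classes and whose edges are $[U]\to[gU]$, $g\in G$. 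Two preliminary facts feed in: $\mathcal X$ is locally finite ($|G|=30$); and since $g^{-1}$ is a Clifford conjugate of $T^{-1}$, hence $\sim$-equivalent to some generator in $G$ by \Cref{lem:tech}(1), $\mathcal X$ may be treated as undirected, so $d$ is symmetric and is finite exactly when such a word exists — the intended input regime, and otherwise the statement is vacuous. The base case $r_L=r_R$ (indeed $[r_L]=[r_R]$, $d=0$) is disposed of immediately by returning $e$.

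Next I would run a loop invariant on the \texttt{while} loop. The searches \textsc{left} and \textsc{right} are exactly the breadth-first enumeration of \Cref{alg:LUT} seeded at $r_L$ and $r_R$, so by \Cref{thm:LUT_correct} after $\rho_L$ (resp.\ $\rho_R$) extensions \textsc{left} (resp.\ \textsc{right}) holds one representative of every class within distance $\rho_L$ (resp.\ $\rho_R$), with \textsc{left.back} the classes at distance exactly $\rho_L$, and likewise for \textsc{right}. The invariant to maintain at the top of each iteration is: $\mathrm{meet}=\bot$ implies $d>\rho_L+\rho_R$. For this, observe that a \textsc{meet} can only be produced when a freshly generated frontier element of one search is $\sim$-equal to an element of the other search's \emph{back} layer; such an element witnesses a class at distance $\rho_L$ from $r_L$ and $\rho_R$ from $r_R$, forcing $d\le\rho_L+\rho_R$. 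Contrapositively, while $\rho_L+\rho_R<d$ no \textsc{meet} is found; and since each iteration increases $\rho_L+\rho_R$ by one, it remains to show a \textsc{meet} \emph{is} found the first time $\rho_L+\rho_R=d$. Here I use the geodesic-midpoint fact: on a geodesic realizing $d$, the vertex at position $\rho_L$ is a class $v$ at distance exactly $\rho_L$ from $r_L$ and exactly $\rho_R$ from $r_R$. Whichever side was just extended, a representative of $[v]$ appears in that side's frontier at its correct distance — the parity result \Cref{thm:parity_flip_no_equal} is used here to certify that freshly generated elements sit at the expected frontier distance rather than being confused with a shallower, stale layer — and $[v]$ lies in the opposing back layer, so \textsc{ExtendOneStep\_MITM} returns it. Hence the loop terminates, and at termination $\rho_L+\rho_R=d$ with \textsc{meet} at distance $\rho_L$ from $r_L$ and $\rho_R$ from $r_R$.

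Finally I would reconstruct and audit the returned word. From $\mathrm{meet}\sim m_L\sim m_R$, with $m_L$ a length-$\rho_L$ word based at $r_L$ and $m_R$ a length-$\rho_R$ word based at $r_R$, the Clifford pair sought in the penultimate line exists because $m_L r_L$ and $m_R r_R$ lie in the same $\sim$-class by construction, so the brute-force step succeeds; write its output $c_0,c_1\in\cliff$ with $c_1 m_L r_L c_0 = m_R r_R$. Unwinding definitions — and using \Cref{lem:tech} together with \Cref{thm:equiv} to rewrite inverses of $T$-steps and the interspersed Cliffords back into words over $G$ without changing $T$-count — the returned word $w$ equals $r_R\,c\,r_L^{-1}$ for a Clifford $c$ determined by $c_0,c_1$, and its $T$-count is $\rho_L+\rho_R$: the $m_L$- and $m_R$-derived blocks contribute their word lengths, generator inverses contribute one $T$ apiece, and Cliffords contribute nothing. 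Since $\rho_L+\rho_R=d$ is the $T$-distance, this count is minimal, which proves the theorem.

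I expect the main obstacle to be the bidirectional-search optimality step — pinning down the loop invariant so that no overshoot is possible. The two delicate features are that the algorithm expands whichever \emph{frontier is numerically smaller} rather than whichever radius is smaller, and that a candidate \textsc{meet} is tested only against the opposing search's \emph{back} layer rather than against all of its visited classes; one must check, leaning on \Cref{thm:parity_flip_no_equal}, that neither feature lets the two searches ``slide past'' each other so that the first detected meeting realizes a sum of radii strictly larger than $d$. By contrast, the reconstruction and $T$-count bookkeeping in the last paragraph are routine consequences of \Cref{thm:equiv}, \Cref{thm:LUT_correct}, and \Cref{lem:tech}.
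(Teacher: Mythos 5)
Your proof is correct and takes essentially the same route as the paper: a loop-invariant argument on the \textbf{while} loop (your ``$\mathrm{meet}=\bot \Rightarrow d > \rho_L+\rho_R$'' is the contrapositive of the paper's invariant (iii)), layer completeness via \Cref{thm:LUT_correct}, \Cref{thm:parity_flip_no_equal} to control frontier distances, and reconstruction of a word whose $T$-length equals the sum of the two radii. If anything, your explicit geodesic-midpoint argument for why a meeting class must be detected the first time $\rho_L+\rho_R=d$ (regardless of which side is extended or that only the opposing back layer is checked) makes precise a step the paper's maintenance case leaves implicit.
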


\begin{proof}
We argue by loop invariant for the \textbf{while} loop in
Algorithm~\ref{alg:mitm}.

\begin{loopinvariantproof}

\Invariant
At the start of each iteration of the while loop, 
\begin{enumerate}[label=(\roman*)]
\item $\textproc{left}$ and $\textproc{right}$ are lookup tables produced by Algorithm~\ref{alg:LUT},
so that $\textproc{left}_i$ (resp.\ $\textproc{right}_j$) contains exactly one representative of each
equivalence class at $T$-distance $i$ from $r_L$ (resp.\ $j$ from $r_R$).
\item The sets of equivalence classes represented by
\[
\bigcup_{i\le d_L} \textproc{left}_i
\quad\text{and}\quad
\bigcup_{j\le d_R} \textproc{right}_j
\]
are disjoint, where $d_L$ and $d_R$ are the current $T$-counts of $L$ and $R$.
\item If an equivalence class $x$ satisfies
\[
d_T(r_L,x) + d_T(r_R,x) \le d_L + d_R,
\]
then $x$ has already been detected and the algorithm would have terminated.
\end{enumerate}

\Initialization
Initially, $L_0=\{r_L\}$ and $R_0=\{r_R\}$.
If $r_L=r_R$, the algorithm terminates immediately and returns the empty word.
Otherwise, the represented equivalence classes are disjoint and the invariant
holds with $d_L=d_R=0$.

\Maintenance
Assume the invariant holds at the start of an iteration.
Without loss of generality, suppose the algorithm extends \textproc{left} by one layer;
the case where \textproc{right} is extended is symmetric.

By Theorem~\ref{thm:LUT_correct}, the call to
\textproc{ExtendOneStep\_MITM} extends $L$ by adding exactly the equivalence
classes at $T$-distance $d_L+1$ from $r_L$, and no others.
By Theorem~\ref{thm:parity_flip_no_equal}, every such new class differs in
$T$-distance by exactly one from its parent, so no class at distance $d_L-1$
is reintroduced.

During this extension, each newly discovered equivalence class $x$ is tested
for membership in the currently deepest layer of $R$.
If $x$ is found in $R_{d_R}$, then the algorithm sets $\textproc{meet}=x$ and
terminates.
In that case, we have
\[
d_T(r_L,x)=d_L+1, \qquad d_T(r_R,x)=d_R,
\]
so $d_T(r_L,x)+d_T(r_R,x)=d_L+d_R+1$.

If no such intersection is found, then after the extension the represented
equivalence classes of $L$ up to count $d_L+1$ remain disjoint from those of $R$
up to count $d_R$, and the invariant continues to hold with $d_L$ increased by
one.

\Termination
The loop terminates at the first iteration where an equivalence class $x$ is
found in both tables.
Let $a=d_T(r_L,x)$ and $b=d_T(r_R,x)$ at termination.
Then $x$ admits representatives
\[
x = m_L r_L = m_R r_R
\]
(up to Clifford factors), where $m_L$ and $m_R$ are words of $T$-length $a$ and
$b$ respectively.

The algorithm then computes Clifford elements $c_0,c_1\in\cliff$ such that
\[
c_1 m_L r_L c_0 = m_R r_R,
\]
and returns the word $m_R^{-1} c_1 m_L r_L$, which has $T$-length $a+b$.

To see minimality, suppose there were a word representing
$r_R c r_L^{-1}$ of $T$-length strictly less than $a+b$.
Then there would exist an equivalence class $y$ with
\[
d_T(r_L,y)+d_T(r_R,y) < a+b,
\]
which would contradict part (iii) of the invariant: such a class would have been
detected earlier, forcing earlier termination.
Therefore no shorter word exists.

\end{loopinvariantproof}
\end{proof}

\section{Implementation and Optimization}\label{sec:implementation}

The theory and algorithms in \Cref{sec:theory,sec:algorithms} 
reduce exact enumeration to an extremely simple inner loop:
given a current representative $U$ and a generator $g\in G$, compute $gU$, reduce
to a canonical representative, and attempt to insert it into a set keyed by
equivalence class. Everything else is book-keeping.

That simplicity is deceptive. In practice, the runtime is dominated by the cost
of that inner loop, repeated at enormous scale. A naive implementation that
treats $U$ and $g$ as dense matrices and performs a generic $6\times 6$ matrix
multiply for each neighbor will drown in constant factors long before it hits the
interesting $T$-counts. The main engineering goal of our implementation is
therefore to \emph{redefine the primitive operations} used in
Algorithms~\ref{alg:LUT}--\ref{alg:mitm} so that each neighbor generation is
closer to ``a few cache-resident integer operations'' than to ``general linear
algebra.''

This section describes the main non-obvious implementation choices that make the
enumeration feasible. With very few exceptions, these optimizations do not
change the mathematics: they preserve exactness, the equivalence relation, and
the parity/bipartite logic. What they change is the cost of the group action
kernel, the cost of canonicalization/hashing, and the overhead of parallel
frontier management.

\subsection{Design constraints and the inner loop}
The enumeration and meet-in-the-middle routines repeatedly execute the kernel
\[
U \mapsto \mathrm{Can}(gU),
\qquad g\in G,
\]
where $\mathrm{Can}(\cdot)$ is the canonicalization procedure associated with our
equivalence relation.
This kernel sits in three performance regimes:
\begin{enumerate}
\item \textbf{Neighbor generation:} apply $g$ to $U$ quickly.
\item \textbf{Identity / equivalence filtering:} decide whether the result is
new (first by hash, then by canonical form).
\item \textbf{Data-structure overhead:} insert into a frontier set (LUT) and/or stop
early (search/MITM).
\end{enumerate}
We attacked each of these regimes separately. The main theme is specialization:
we avoid generic matrix operations, generic hash maps, and generic synchronization
in the hot path.

\subsection{Fast group action: T-steps as linear maps}
The most important optimization is the replacement of generic matrix
multiplication with specialized \emph{linear-map composition}.

\paragraph{Why generic multiplication is the wrong abstraction.}
In the $\SO$ representation, $G$ is not an arbitrary set of dense matrices.
Each generator is a Clifford conjugate of a fixed embedded $2\times 2$ rotation
(the image of $T\otimes I$ or $I\otimes T$). Recall that by \Cref{thm:involution},
\[
\overline{T\otimes I} \sim \begin{bmatrix}
    \frac{1}{\sqrt{2}} & \frac{1}{\sqrt{2}} & 0 & 0 & 0 & 0\\
    \frac{1}{\sqrt{2}} & -\frac{1}{\sqrt{2}} & 0 & 0 & 0 & 0\\
    0 & 0 & 1 & 0 & 0 & 0\\
    0 & 0 & 0 & 1 & 0 & 0\\
    0 & 0 & 0 & 0 & 1 & 0\\
    0 & 0 & 0 & 0 & 0 & 1\\
\end{bmatrix} \sim \begin{bmatrix}
    H & 0 \\
    0 & I 
\end{bmatrix}.
\]
Of course, this equivalence steps outside of $\SO$ and into $O(6)$. Nonetheless, this does not impact algorithmic performance or any claims of 
which, when we perform $\overline{T\otimes I} \;U$ for any $U \in \SO$, acts non-trivially on only the first two rows. Moreover, we know with certainty the particular integer operations that follow. Conjugation by Cliffords only changes which rows are acted on non-trivially. Hence, we implement $G_{i,j} U$ as the following linear map, exploiting \Cref{thm:involution}:
\[
    \left[C_{i,j} G_{i,j} U \right]_{k \ell} = \begin{cases}
        \frac{1}{\sqrt{2}} \left(U_{i\ell}+U_{j\ell} \right) & \text{$k=i$}\\
        \frac{1}{\sqrt{2}} \left(U_{i\ell}-U_{j\ell} \right) & \text{$k=j$}\\
        U_{k \ell} & \text{otherwise.}
    \end{cases}
\]
Hence, our implementation directly implements these as row sums/differences followed by an increment of the denominator exponent, which can be done in far fewer basic operations than naive matrix multiplication. This is exactly the computation performed by the corresponding embedded $2\times
2$ block after conjugation by a signed permutation.

\paragraph{Compile-time specialization and constant-time dispatch.}
A practical complication is that $G$ contains many conjugates, but they fall into
a small family parameterized by a pair of coordinates ($\g[i,j]$ in \Cref{sec:theory}). Instead of interpreting these indices at runtime, we generate
specialized operator implementations for each admissible coordinate pair at
compile time. At runtime, applying a generator becomes a two step process
\begin{enumerate}
\item map the generator id to a function pointer (or an index into a function
table) and
\item invoke the specialized operator.
\end{enumerate}
This keeps the hot path branch-free and lets the compiler inline the actual math.

\paragraph{Metadata-free and canonicalization-free variants.}
The enumeration uses several slightly different flavors of ``apply $g$.'' We achieve them all in one version by implementing lazy evaluations and memoization to handle canonicalization/hashing and other metadata. This is a classic trick: since these functions often go unevaluated, we only compute them once and we can direct the compiler to assume the correct branch. This keeps the instruction
stream predictable and avoids wasted work.

\subsection{Backtracking suppression and reconstruction}
Even with correct bipartite parity, naive neighbor generation still generates a
lot of redundant candidates. The simplest example is immediate backtracking:
apply a generator and then apply its inverse (or equivalently, undo the last
step). \Cref{thm:parity_flip_no_equal} already implies that a single $T$-step flips parity,
so an immediate undo is the fastest way to revisit the previous layer.

We suppress this mechanically by storing a small amount of state with each
representative: the id of the last $T$-operator used to generate it. This also provides the mechanism by which we can reconstruct a full \SO from a single lookup into our LUT generated in \Cref{alg:LUT}. 

To add efficiency, when
expanding neighbors, we skip the operator that would undo the last step. This
reduces the branching factor in the hottest loop without changing correctness:
it only removes paths that would immediately return to the layer we just came
from.

This trick is small on paper but not insignificant in practice. Because we run up against machine limitations, skipping this call can alleviate up to $\scriptstyle\frac{1}{15}$\textsuperscript{th} of our runtime. 

Because (in principle) \Cref{alg:LUT} discovers exploitable relations that should not be re-implemented, it seems plausible that a recursive data-structure might afford substantially increased efficiency. This remains a non-obvious task and for a discussion of this, see \Cref{sec:Discussion}.

\subsection{Canonicalization and hashing: cheap signatures first}
Canonicalization is conceptually simple but can be expensive if computed
from scratch, especially when unnecessary. In our implementation, canonicalization is organized as
a pipeline. First, compute a compact signature (cheap invariants). Second, use the signature to drive ordering/comparison. Finally, and only when necessary, compute more expensive canonical form data. We achieve this through a series of lazy evaluations with memoization.

\paragraph{Compact signatures.}
We store a small hash-like summary of a representative that is cheap to update
under generator application. The goal is not cryptographic hashing; it is to
avoid expensive comparisons and to keep equality checks fast in the common case. That is, we want a hash that optimizes for speed-of-computation, even at the cost of distribution quality. We do not worry as much about avoiding collisions because final comparisons always require canonical forms. This occurs because, for a hash to truly be utilized as a provable distinguisher, it would need to serve as a group homomorphism. In this case, creating a good hash seems to be as difficult as compiling itself. Our signatures/hashes are largely experimentally chosen.

\paragraph{Permutation encoding.}
Signed permutations occur constantly: they represent Cliffords in $\SO(6)$, and
they also appear as metadata inside canonicalization. We store permutations in a
compact ranked representation (e.g., Lehmer code for $S_6$) so that (1) permutation equality is a short bitstring comparison, (2) composing or inverting permutations is cheap/irrelevant, and (3) comparing canonical candidates can short-circuit early. This matters because canonicalization repeatedly compares candidates that differ only by a small permutation/sign action.

\paragraph{Allocation avoidance.}
Canonicalization often tempts one into dynamic containers (maps, vectors) for
frequency counts or support sets. However, in the inner loop, dynamic allocation is
poison. We therefore use fixed-capacity, stack-friendly structures for the
tiny maps that appear in signature computation. Our goal is to keep canonical form
and hash recomputation allocation-free in the hot path. Many of these are custom written classes that utilize lookup tables and/or optimize bitwise operations over small bitstrings.

\subsection{Frontier data structures and deduplication strategy}
The BFS layer expansion uses two kinds of deduplication: (1) \textbf{local deduplication} within the next frontier, which keeps one
representative per equivalence class; and (2) \textbf{global deduplication} against previously finalized layers, which avoids reintroducing equivalence classes that have already been explored.

\paragraph{Why global deduplication matters.}
Algorithmically, the parity theorem tells us that a one-step expansion from layer
$i$ lands in layers $i\pm 1$, so checking $\textproc{LUT.previous}$ prevents
immediate regression. However, in practice, distinct paths can still collide in the
same layer, and collisions across non-adjacent layers can occur once you quotient
by equivalence. If you only deduplicate locally, the frontier grows far faster
than the set of genuinely new classes.

Our implementation maintains a fast membership structure over the union
of finalized layers (e.g., a hash set keyed by canonical signatures). Candidate
neighbors are rejected if they appear anywhere in the explored set, not only in
the immediate predecessor layer. This is a performance optimization, not a
correctness requirement, but it drastically reduces redundant work.

\paragraph{Concurrent sets.}
The frontier expansion is embarrassingly parallel, but only if the data
structures cooperate. We represent the current layer as an iterable container
and the next layer as a concurrent hash set that supports safe parallel inserts
from many threads. The key properties we need are only that insertion (1) adds only new elements (new being determined by canonical form) and (2) returns a flag when a new element is inserted. This lets us retain only one representative per $T$-count, as well as call stop predicates only on newly inserted elements (MITM).

\subsection{Parallelization model}
The high-level parallelization strategy is simple: parallelize the expansion of
the current frontier. The details matter.

\paragraph{Work distribution.}
We use a parallel-for-each style traversal over the elements of the current
layer. Each worker:
\begin{enumerate}
\item iterates over a subset of current representatives $U$,
\item generates neighbors $gU$ for all $g\in G$ (minus backtracking skips),
\item canonicalizes and attempts insertion into the concurrent next set.
\end{enumerate}
This avoids centralized queues and makes the work naturally balanced as long as
the current layer is large.

\paragraph{Low-overhead progress tracking.}
For long runs, we track progress via per-thread counters combined at the end of
the iteration. This avoids global atomics in the hot path and gives accurate
statistics for benchmarking (insert attempts, successful inserts, stop hits).

\subsection{MITM as a stopping predicate, not a special case}
Meet-in-the-middle search uses the same underlying layer expansion as LUT
generation, but with an additional termination condition:
stop as soon as the newly discovered class intersects the opposite search.

\paragraph{Stopping predicate interface.}
Rather than embedding the MITM logic directly into the enumeration kernel, we treat
MITM as a BFS layer expansion with a user-supplied stopping predicate. The expansion routine accepts an optional predicate $\mathrm{StopPred}(x)$ that
is evaluated only when a candidate $x$ is successfully inserted into the next
frontier (i.e., it is genuinely new at that count). If the predicate returns
\textbf{\textproc{true}}, the expansion records the corresponding witness and terminates layer
generation in a best-effort way.

This is cleaner than ``break early'' in both code and exposition: (1) correctness is proved as ``the first time the predicate fires corresponds
to a minimal-count intersection,'' and (2) the implementation is just an additional boolean check in the hot path.

\paragraph{Best-effort termination and determinism.}
In parallel execution, ``stop immediately'' is not realistic without expensive
synchronization. We instead implement a standard best-effort early-stop:
once any worker finds a hit, it sets a shared stop flag and records a single
winner value. Other workers check the flag at loop boundaries and return early.
This preserves correctness (we only require that a hit implies termination, not
that termination occurs at a specific instruction) and substantially improves performance
 when a target is found early.

\paragraph{Frontier choice policy.}
In MITM, we repeatedly decide whether to expand the left or right table. A simple
and effective policy is to expand the side with the smaller current frontier. This retains balance and ensures the smallest number of $T$-multiplications in each frontier expansion. This
keeps the total work minimal for MITM. It also interacts well with the stop predicate, since a smaller frontier means fewer predicate checks.

\subsection{Memory layout, locality, and packing}
\label{sec:impl-memory-packing}
At the scale we target, memory traffic matters as much as arithmetic.
Every layer expansion touches millions of representatives and spends most of its time
(1) applying a small collection of $T$-steps, and then (2) canonicalizing and inserting results into hash tables.
We therefore design the state so that it is exact, compact, and cheap to compare.

A representative \SO consists of a $6\times 6$ matrix over $\mathbb{Z}[1/\sqrt2]$ plus a small amount of metadata used by
canonicalization and search (e.g., a cached signature). We store the 36 entries in a flat array (column-major) so the inner loop can scan columns linearly and keep cache misses predictable.

A naive approach would store each matrix entry as a floating point number or as a heap-allocated big-integer object.
Instead we represent the $x \in \mathbb{Z}[1/\sqrt{2}]$ as 
\[
x \;=\; \frac{a + b\sqrt2}{\sqrt2^{\,c}}
\qquad (a,b\in\mathbb{Z},\ c\in\mathbb{Z}_{\ge 0})
\]
in a fixed-width packed format. We store the numerator coefficients $(a,b)$ in two signed fields (two's-complement) and store the exponent $c$ in a small unsigned field. This acts like a tiny custom floating-point number, except the ``mantissa'' lives in the exact ring $\mathbb{Z}[\sqrt2]$.

Addition needs exponent alignment and we require common denominators. We split this scaling into an even part and an odd part:
\[
\sqrt2^{2m} = 2^m,
\qquad
\sqrt2^{2m+1} = \cdot 2^m\sqrt2.
\]
The $2^m$ factor is just a bit shift on the packed numerator fields.
For the remaining $\sqrt2$ factor we use the identity
\[
(a+b\sqrt2)\sqrt2 = 2b + a\sqrt2,
\]
which acts on coefficients as $(a,b)\mapsto(2b,a)$.
In packed form this becomes a constant-time ``swap-and-double'' update (swap the coefficient fields and double one of them),
so we can align exponents without unpacking or using multiword arithmetic.
This is the same coefficient-transform viewpoint as Knuth's imaginary-base arithmetic, where multiplication by $i$
acts as a swap-and-sign update $(a+bi)i=-b+ai$ \cite{Knuth1960Imaginary}.
(We do not use a positional base-$\sqrt2$ numeral system; we only borrow the constant-time coefficient transform.)

We ultimately normalize each nonzero dyadic into a \emph{reduced} encoding.
Concretely, we call $x=(a+b\sqrt2)/\sqrt2^{\,c}$ \emph{reduced} if $a$ is odd or 0 (and we encode $0$ uniquely as $a=b=c=0$). A short reduction procedure divides out the largest power of $\sqrt2$ common to the numerator coefficients and decrements $c$ accordingly. This invariant matters, because keeping memory overhead low and comparisons cheap requires a consistent single, reduced form.

The next Theorem is the reason reduction almost never runs in the hot path.
\begin{theorem}
\label{thm:dyadic-reduce-only-at-match}
Let $x=(a+b\sqrt2)/\sqrt2^{\,c}$ and $y=(a'+b'\sqrt2)/\sqrt2^{\,c'}$ be reduced nonzero dyadics, so $a$ and $a'$ are odd.
Form $x\pm y$ by using the common denominator $\sqrt2^{\max(c,c')}$. If $c\neq c'$, then the resulting encoding already has odd integer coefficient and is reduced. If $c=c'$, then the integer coefficient becomes even (odd $\pm$ odd), so at least one reduction step may be required.
\end{theorem}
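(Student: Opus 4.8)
The plan is to reduce everything to two elementary facts about the ring $\Z[\sqrt2]$ and then split on whether the two denominator exponents agree. The facts are: (i) multiplying a numerator by $\sqrt2$ is the swap-and-double transform $(a,b)\mapsto(2b,a)$, since $(a+b\sqrt2)\sqrt2 = 2b + a\sqrt2$, while multiplying by $\sqrt2^{\,2}=2$ acts as $(a,b)\mapsto(2a,2b)$; and (ii) $a+b\sqrt2$ is divisible by $\sqrt2$ in $\Z[\sqrt2]$ if and only if $a$ is even, so a nonzero dyadic written as $(a+b\sqrt2)/\sqrt2^{\,c}$ is reduced exactly when its integer coefficient $a$ is odd. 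Both are one-line verifications from the definitions, and I would record them before the case split.

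For the case $c\ne c'$, I would assume without loss of generality $c<c'$ (otherwise exchange the roles of $x$ and $y$) and set $d=c'-c\ge 1$. Bringing $x$ to the common denominator $\sqrt2^{\,c'}$ multiplies its numerator by $\sqrt2^{\,d}$; writing $d=2m$ or $d=2m+1$ and applying fact (i), the numerator of $x$ becomes $2^m a + 2^m b\sqrt2$ in the even case (here $m\ge 1$ because $d\ge 2$) or $2^{m+1}b + 2^m a\sqrt2$ in the odd case. In either case the resulting integer coefficient is even — this includes the degenerate subcase $b=0$, where it equals $0$, which is still even. Hence the integer coefficient of $x\pm y$ is $(\text{even})\pm a'$, and this is odd because $a'$ is odd by hypothesis. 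An odd integer coefficient is in particular nonzero, so $x\pm y\ne 0$, and by fact (ii) its numerator is not divisible by $\sqrt2$. Therefore the encoding with denominator $\sqrt2^{\,\max(c,c')}=\sqrt2^{\,c'}$ already has odd integer coefficient and is reduced, so no reduction step runs.

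In the case $c=c'$ there is no rescaling, and I would simply observe that
\[
x\pm y \;=\; \frac{(a\pm a') + (b\pm b')\sqrt2}{\sqrt2^{\,c}} .
\]
Its integer coefficient $a\pm a'$ is a sum or difference of two odd integers, hence even, so by fact (ii) the numerator is divisible by $\sqrt2$ (indeed by $2$ when $b\pm b'$ is also even, with the value $0$ as the cancellation subcase). Thus the encoding is not reduced and at least one reduction step may be required, which is exactly the stated conclusion.

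I do not expect a genuine obstacle: the whole argument is a parity computation. The only place that needs care is the rescaling in the $c\ne c'$ case, where one must track the parity of $d=\lvert c-c'\rvert$ correctly through the swap-and-double map and check the $b=0$ (resp.\ $b'=0$) subcase, in which the rescaled integer coefficient vanishes but stays even, leaving the parity bookkeeping unchanged. As context for the implementation remark preceding the theorem, the $c\ne c'$ branch is the generic one whenever the enumeration adds two ring elements, which is why reduction is almost never triggered in the hot path; the theorem pinpoints the lone situation ($c=c'$) in which it can be.
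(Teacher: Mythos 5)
Your proposal is correct and follows essentially the same route as the paper's proof: rescale the element with the smaller exponent via the swap-and-double map, observe that the rescaled integer coefficient is always even (noting $m\ge 1$ in the even-power subcase), and conclude odd $\pm$ even is odd when $c\ne c'$ versus odd $\pm$ odd is even when $c=c'$. Your added remarks (the divisibility-by-$\sqrt2$ criterion and the $b=0$ degenerate subcase) only make explicit what the paper leaves implicit.
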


\begin{proof}
Assume $c\ge c'$ and set $N=c-c'$.
If $N=0$, then $x\pm y=((a\pm a')+(b\pm b')\sqrt2)/\sqrt2^{\,c}$ and the integer coefficient $a\pm a'$ is even since $a,a'$ are odd.

Now, without loss of generality, assume $N>0$. We scale $y$ by $\sqrt2^{N}$. If $N=2m$ is even, then $\sqrt2^{N}=2^m$, so the scaled integer coefficient is $2^m a'$, which is even. If $N=2m+1$ is odd, then
\[
(a'+b'\sqrt2)\sqrt2^{N} = (2b'+a'\sqrt2)\,2^m,
\]
so the scaled integer coefficient is $2^{m+1}b'$ and is, again, even. In either case, the summand coming from $y$ has an even integer coefficient, while $x$ has an odd integer coefficient. Therefore, the integer coefficient of the sum/difference remains odd, so the result is already reduced.
\end{proof}

Two's-complement fields make sign operations cheap. Unary negation multiplies the entry by $-1$ by applying a single modular complement to the packed numerator word (with a special case so that zero has a unique encoding),
so it negates both $a$ and $b$ at once.
The nontrivial Galois automorphism of $\mathbb{Q}(\sqrt2)$,
\[
(a+b\sqrt2)^\bullet := a-b\sqrt2,
\]
just flips the sign of the $b$ field.
Exact Clifford+$T$ synthesis papers working in cyclotomic/dyadic rings often denote this map by a ``bullet'' and refer to it as $\sqrt2$-conjugation or a \emph{twist}
\cite{giles2013exact,kliuchnikov2013synthesis,selinger2012efficient}. We have done this both through bit-twiddling and direct negation on $b$, though after many code optimizations, bit-twiddling proved less useful.

Packing makes equality and hashing cheap. We can compare entries by comparing their packed words, and we can hash them by hashing those same bits. At the matrix level we add an extra short-circuit layer: we maintain small cached signatures (hash-like summaries) that update cheaply under $T$-steps. Most candidate insertions fail quickly on these signatures without touching all 36 entries.

Canonicalization and hashing still dominate runtime, so we structure them to be kind to caches.
Whenever possible we update signatures and small frequency summaries incrementally during generator application, instead of recomputing them from scratch.
When we do recompute hashes, we scan the matrix in the stored order and avoid data-dependent indirections. The guiding rule is simple: touch as few entries as possible, and when you must touch them, touch them sequentially.

At the scale we target, memory traffic is as important as arithmetic. We have reduced our memory footprint through a number of measures.

\paragraph{Locality-aware canonicalization.}
Whenever possible, we update signatures and metadata lazily during
generator application. When recomputation is necessary, we structure it to read
data sequentially and to keep working sets small. Our biggest win is simply to avoid  touching all 36 entries unless truly necessary.

\subsection{Summary: why these optimizations matter}
The paper’s algorithms are simple, but they only become useful when the constant
factors are small. The optimizations above reduce those constant factors in three
ways:
\begin{enumerate}
\item They replace generic matrix multiplication with specialized linear-map
kernels for $T$-steps.
\item They reduce the number of candidate neighbors that survive long enough to
access expensive data structures.
\item They make parallel exploration inexpensive by using concurrent frontiers, atomic
best-effort stopping, and allocation-free hot paths.
\end{enumerate}
The empirical effect of these choices appears directly in the scaling behavior
reported in \Cref{sec:Results}: once the inner loop is cheap enough,
the enumeration becomes limited by the intrinsic growth of the search space
rather than by avoidable overhead.

\section{Benchmarking and Results}\label{sec:Results}
\subsection{Experimental setup}
\label{sec:experimental-setup}
We ran all experiments on a single-socket workstation running Ubuntu~25.04 (Plucky Puffin) with Linux kernel \texttt{6.14.0-37-generic}. 
The machine uses an AMD Ryzen 9~7900X CPU (12 physical cores / 24 hardware threads) with a single NUMA node. The system reports a nominal frequency of 4.70\,GHz with boost support enabled.
Cache sizes are 32\,KiB L1d and 32\,KiB L1i per core, 1\,MiB L2 per core, and 64\,MiB shared L3.
The machine has 192\,GB of RAM with 8\,GiB swap configured.
All runs used local NVMe storage; the working filesystem is ext4 on a 2\,TB Samsung 9100~PRO SSD.

\subsection{Benchmarks}
In this section, we present our various benchmarks. \Cref{tab:lut_mem} includes benchmarking results for lookup table generation for both the standard (identity) root and a randomly chosen root. We plot these results in \Cref{tab:lut_mem}. These benchmarks were derived from the codebase found in \cite{Jarret_Exact_Synthesis}.

{
\begin{table}[t]
\centering
\scriptsize
\setlength{\tabcolsep}{2.5pt}
\caption{LUT build scaling (Google Benchmark output). Columns are maximum depth $k$. \texttt{reps} is the total number of stored canonical representatives across all layers $0..k$ (cumulative). Times are per-iteration CPU time and RSS is peak representative set size.}
\label{tab:lut_mem}
\begin{tabular}{lrrrrrrrrrrrr}
\toprule
 & 1 & 2 & 3 & 4 & 5 & 6 & 7 & 8 & 9 & 10 & 11 & 12 \\
\midrule
\multicolumn{13}{l}{\textbf{identity}} \\
representatives & 2 & 4 & 10 & 29 & 106 & 477 & 2,637 & 17,092 & 123,925 & 970,266 & 8.0e6 & 6.7e7 \\
time (s) & 0.002 & 0.003 & 0.004 & 0.007 & 0.01 & 0.012 & 0.018 & 0.036 & 0.135 & 0.964 & 11.1 & 414 \\
RSS (GB) & 0.005 & 0.005 & 0.007 & 0.008 & 0.009 & 0.01 & 0.012 & 0.022 & 0.07 & 0.362 & 2.65 & 22.33 \\
\midrule
\multicolumn{13}{l}{\textbf{random}} \\
representatives & 16 & 181 & 1,919 & 20,073 & 208,142 & $2.1e6$ & $2.2e7$ & $2.3e8$ & -- & -- & -- & -- \\
time (s) & 6.7 e{-6} & 1e-4 & 0.000731 & 0.007 & 0.094 & 1.5 & 48.8 & 4.74e3 & -- & -- & -- & -- \\
RSS (GB) & .005 & .007 & .010 & .023 & .109 & .741 & 7.15 & 65.7 & -- & -- & -- & -- \\
\bottomrule
\end{tabular}
\end{table}
}

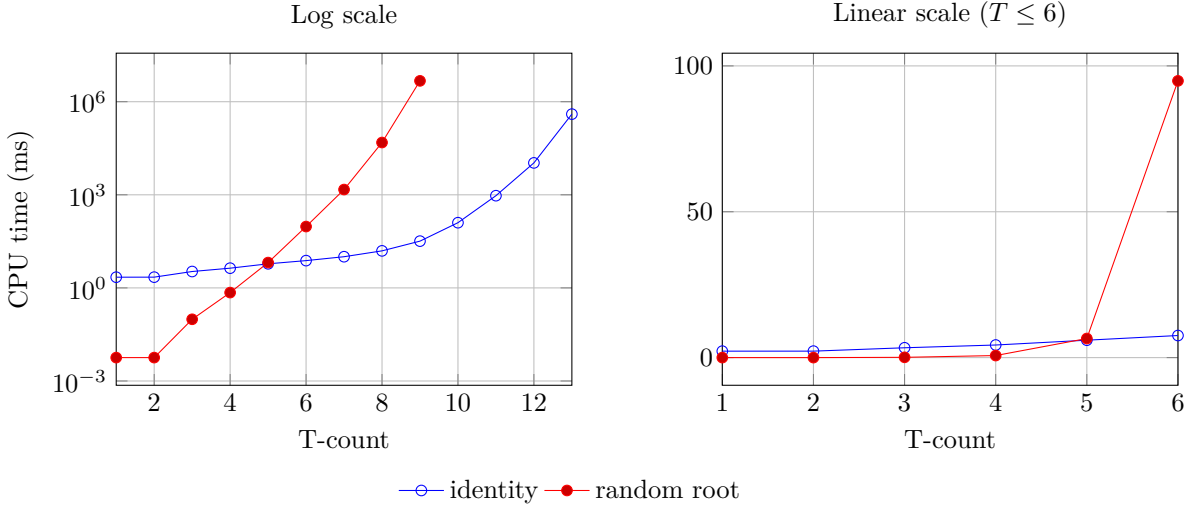
\begin{figure}[H]
\centering
\begin{tikzpicture}
\begin{groupplot}[
  group style={group size=2 by 1, horizontal sep=2cm},
  width=0.49\linewidth,
  height=6cm,
  grid=both,
  xlabel={T-count},
  ylabel={CPU time (ms)}
]

\nextgroupplot[
  title={Log scale},
  xmin=1, xmax=13,
  ymode=log,
  clip=false,
  legend style={
    at={(1,-0.25)}, anchor=north,
    draw=none,
    legend columns=3,
  },
]
\addplot+[mark=o] coordinates {
  (1,2.212093716) (2,2.22038592) (3,3.373862858) (4,4.320413855)
  (5,5.970639387) (6,7.57404067) (7,10.11660853) (8,15.62299899)
  (9,32.13974656) (10,126.2976408) (11,936.441558) (12,10683.06666)
  (13,398324.9451)
};
\addlegendentry{identity}

\addplot+[mark=*] coordinates {
  (1,5.669405868e-3) (2,5.672614085e-3) (3,9.739437177e-2)
  (4,7.060268914e-1) (5,6.54640413) (6,94.87038023)
  (7,1476.831323) (8,48408.01546) (9,4700958.091)
};
\addlegendentry{random root}

\nextgroupplot[
  title={Linear scale ($T\le6$)},
  xmin=1, xmax=6,
  ylabel={}, %
]
\addplot+[mark=o] coordinates {
  (1,2.212093716) (2,2.22038592) (3,3.373862858)
  (4,4.320413855) (5,5.970639387) (6,7.57404067)
};
\addplot+[mark=*] coordinates {
  (1,5.669405868e-3) (2,5.672614085e-3) (3,9.739437177e-2)
  (4,7.060268914e-1) (5,6.54640413) (6,94.87038023)
};

\end{groupplot}
\end{tikzpicture}

\caption{\label{fig:lut_scaling} Time to build a complete lookup table of all unique representatives (up to equivalence) for a particular T-count.
\label{fig:lutbuild}}
\end{figure}
 
As one can see, the random root rapidly approaches its asymptotic scaling, which should never grow substantially faster than powers of $14$. At large enough $T$-counts, the identity-rooted table also begins to scale as the randomly-rooted table. Nonetheless, at the outset random roots greatly outperform the identity.

This behavior is not difficult to understand. With a random root, each $T$ operator applied to the root is very likely to create an easily distinguishable representative. Thus, we rarely decide equivalence and can even rapidly reject equivalences allowing immediate inserts. At around $T=5$, we see that the absolute size of the relative database (and hence the sheer number of comparisons) begins to compete with the speed with which each comparison can be performed. 

In contrast, with the identity as a root, the first few $T$-counts have a high degree of equivalences. (All single $T$ \SO matrices are equivalent, for instance.) Thus, the growth rate of the table is slower, as evidenced by the longer lead time before exponential explosion. At higher $T$ counts, we eventually end up in a similar situation as the random root; easily distinguishable representatives, but large databases. Of course, the database size of the identity-rooted table does ultimately catch up to that of the randomly-rooted table.

Next, we consider the performance of our meet-in-the-middle routine. We plot our performance against \cite{Amy2013}, which we compiled and ran locally for reference. We consider two versions of the meet-in-the-middle procedure. In the standard version, we run \Cref{alg:mitm} directly as stated. In the brute force amended version, after each new layer of the LUT is created, we do a brute force search for intersections, where we store nothing in memory. The time to perform brute force search is left as a hardcoded parameter, but a natural choice would be a length of time proportionate to the length of time one anticipates it will take to create the next layer. This could, of course, be calculated, but we do not do that here. 

\begin{figure}[H]
\centering
\begin{tikzpicture}
\begin{groupplot}[
  group style={group size=2 by 1, horizontal sep=2cm},
  width=0.49\linewidth,
  height=6cm,
  grid=both,
  xlabel={T-count},
  ylabel={CPU time (ms)}
]

\nextgroupplot[
  title={Log scale},
  ymode=log,
  clip=false,
  legend style={
    at={(1,-0.25)}, anchor=north,
    draw=none,
    legend columns=2,
  },
]

\addplot+[mark=o] coordinates {
  (1,6.38136)
  (2,293.365)
  (3,379.378)
  (4,470.216)
  (5,276.206)
  (6,697.955)
  (7,393.638)
  (8,437.673)
  (9,517.395)
  (10,587.806)
  (11,532.576)
  (12,751.049)
  (13,1048.02)
  (14,2027.67)
  (15,4244.28)
  (16,15181)
  (17,34703.8)
  (18,165059)
};
\addlegendentry{standard}

\addplot+[mark=*] coordinates {
  (1,6.36778)
  (2,24.5729)
  (3,24.2383)
  (4,25.5553)
  (5,36.395)
  (6,236.797)
  (7,437.79)
  (8,463.228)
  (9,578.31)
  (10,820.077)
  (11,953.365)
  (12,1358.74)
  (13,1833.99)
  (14,2521.03)
  (15,4923.19)
  (16,7938.42)
  (17,33719.8)
  (18,96831.9)
  (20,3.13455e6)
};
\addlegendentry{standard + brute force}

\addplot+[mark=x] coordinates {
  (1,30.2)
  (2,335)
  (3,339)
  (4,2921)
  (5,2947)
  (6,24171)
  (7,175864)
};
\addlegendentry{Amy et al. \cite{mitms2026} (\textit{not} $T$-) depth-optimal}

\nextgroupplot[
  title={Linear scale ($T \le 18$)},
  ymode=linear,
  ylabel={}, 
  xmin=1,
  xmax=18,
  scaled y ticks=false,
  clip=false
  ]

\addplot+[mark=o] coordinates {
  (1,6.38136)
  (2,293.365)
  (3,379.378)
  (4,470.216)
  (5,276.206)
  (6,697.955)
  (7,393.638)
  (8,437.673)
  (9,517.395)
  (10,587.806)
  (11,532.576)
  (12,751.049)
  (13,1048.02)
  (14,2027.67)
  (15,4244.28)
  (16,15181)
  (17,34703.8)
  (18,165059)
};

\addplot+[mark=*] coordinates {
  (1,6.36778)
  (2,24.5729)
  (3,24.2383)
  (4,25.5553)
  (5,36.395)
  (6,236.797)
  (7,437.79)
  (8,463.228)
  (9,578.31)
  (10,820.077)
  (11,953.365)
  (12,1358.74)
  (13,1833.99)
  (14,2521.03)
  (15,4923.19)
  (16,7938.42)
  (17,33719.8)
  (18,96831.9)
};

\addplot+[mark=x] coordinates {
  (1,30.2)
  (2,335)
  (3,339)
  (4,2921)
  (5,2947)
  (6,24171)
  (7,175864)
};

\end{groupplot}
\end{tikzpicture}

\caption{Time to perform MITM search with no pre-built database for a representative. The horizontal axis is the size of an optimal $T$-count circuit for that representative. We plot against the depth-optimized and \textit{not} $T$-depth optimized circuits found by \cite{Amy2013}. $T$-depth optimized circuits were too slow to be plotted on the same scale.}
\end{figure}
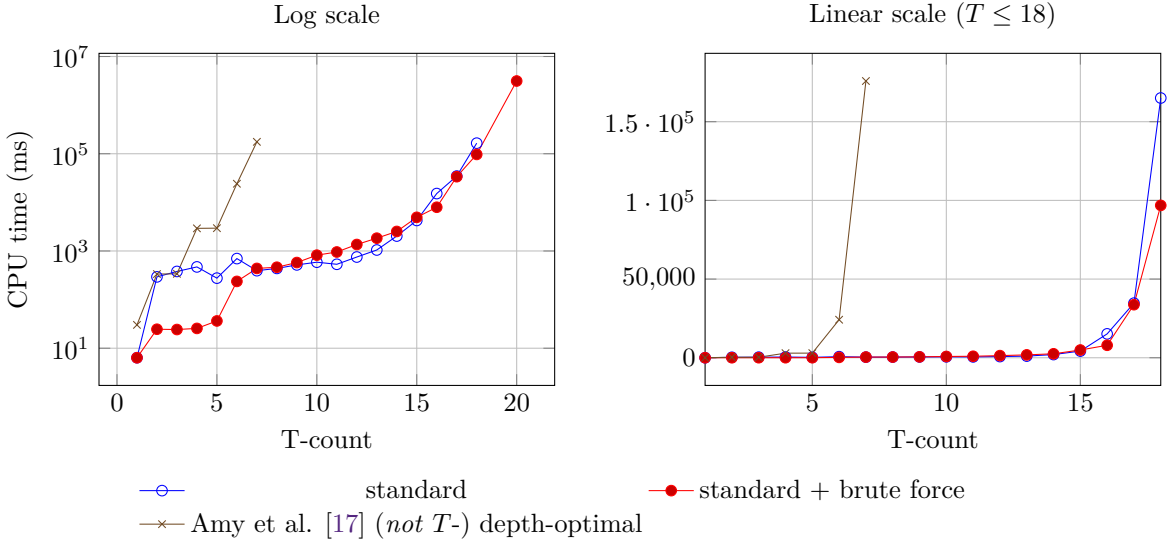
 
In this case, what we find is that the two methods scale fairly similarly initially. At higher $T$-counts, the brute force amended version begins to become highly advantageous, allowing us to reach higher $T$-counts than previously possible. We anticipate that some playing with and balancing of this parameter can allow us to perform deeper exact synthesis much faster, since parallelization becomes more useful here, especially in HPC settings.

Most importantly is the comparison of our methods to those of Amy et al. Importantly, we \textit{did not} compare to their $T$-depth-optimal procedure (which is the closest approximant of $T$-count allowed by their codebase) to ours. This procedure performed too slowly to plot against our own methods. Rather, we compared depth-optimal, end-to-end (LUT generation and MITM) compiling directly to our own performance. One can immediately see that, despite exponential scaling, we are able to compile $T$-count optimal circuits at what would have previously been intractable $T$-counts. 

\section{Discussion and Future Work}
\label{sec:Discussion}
This paper argues that exact synthesis can be practical as compiler infrastructure when (i) the target regime is structured and repeatedly encountered, and (ii) the implementation is engineered so that enumeration and equivalence reduction are limited by intrinsic search-space growth rather than avoidable constant factors. Although we instantiate the approach for two-qubit Clifford+T with T-count as the cost metric, the broader lesson is that combining strong equivalence reduction with gate-set-aware arithmetic can turn offline exact enumeration into a reusable backend for local optimization.

\subsection{Canonicalization as the primary bottleneck}

At present, canonicalization and equivalence checking are among the dominant costs in both lookup-table generation and meet-in-the-middle search. Our current strategy is correct but not yet optimal, and there are clear opportunities to reduce this overhead without changing the underlying mathematics.

First, our implementation currently explores a large signed-permutation action during canonicalization. Because the equivalence relation and the $\SO$ embedding introduce a natural distinction between determinant-preserving and determinant-flipping signed permutations, restricting to an appropriate subgroup (or exploiting a parity/involution structure induced by representative choices) should reduce canonicalization cost and improve scaling.

Second, the involutive-generator viewpoint suggests that tighter normal forms may exist, potentially allowing early short-circuiting of canonicalization in many cases. Developing such structure-aware canonical forms would not change correctness, but could reduce the amortized cost of equivalence testing and thereby increase feasible depth.

\subsection{Extending the step set and constant-time kernels}

A major practical advantage comes from implementing T-steps as specialized linear maps rather than generic matrix multiplication. Our choice of step operators is not unique. An important direction is to enlarge the step set while preserving constant-time application (for example, by incorporating short non-Clifford motifs that admit the same arithmetic shortcuts). Even constant-factor changes to branching and neighbor generation can yield substantial improvements in reachable depth.

\subsection{Large-scale lookup tables and dataset generation}

The lookup-table approach is most powerful when treated as an offline resource that can be generated once and reused across many compilation tasks. Because frontier expansion is embarrassingly parallel, cluster-scale runs are a natural next step for expanding coverage. An enlarged lookup table need not contain all shortest circuits; rather, it should provide broad coverage of equivalence classes with optimal representatives, enabling fast query-based synthesis and supplying large ground-truth datasets for training and evaluating heuristic compilers. 

\subsection{Integration into end-to-end compilation workflows}

For broad adoption, two engineering tasks matter. First, end-user toolchains will require robust conversion between standard $SU(4)/U(4)$ descriptions and the internal $\SO$ representation used here, so that the lookup table and meet-in-the-middle engine can be invoked as backend routines. Second, integrating this exact backend into heuristic pipelines (e.g., sliding-window passes) will require interfaces for matching and rewriting subcircuits and for tracking Clifford frames efficiently.

\subsection{Beyond Clifford+T and beyond SO(6)}

Although Clifford+T with T-count is a natural initial target, the separation between an outer enumeration/MITM layer and an inner arithmetic layer makes the approach portable: the outer logic remains unchanged as long as the inner layer provides exact representations and efficient equivalence testing. Finally, several of our speedups derive from packed exact arithmetic rather than from $\SO$ structure per se; it is therefore plausible that a direct $SU(4)$ implementation with comparable packing could be competitive or better in practice. Such an implementation would potentially use fewer basic operations in each generating step. Thus, it may have been a mistake to utilize the $\SO$ representation and future work may revert to $SU(4)$.
 
\section{Conclusion}

Optimal exact synthesis is valuable because it produces unconditional optima, canonical representatives, and ground-truth instances for evaluating and improving heuristic compilers. However, exact methods are typically limited to carefully scoped regimes, both by intrinsic search-space growth and by the constant factors associated with exact arithmetic and equivalence testing. In this work, we showed that a ``pay once, query forever'' model can be made practical for short-enough two-qubit Clifford+T compilation when optimizing \emph{T-count}, by exhaustively enumerating a bounded region of the operator group while quotienting by compiler-natural equivalences.

Our main contribution is a synthesis pipeline that is simultaneously \emph{exact} and \emph{high-performance}. Algorithmically, we combine breadth-first enumeration with meet-in-the-middle joining and provably correct pruning rules derived from the parity (bipartite) structure induced by T-steps. Representationally, we operate in an $SO(6)$ model in which multiplication, comparison, and Clifford actions admit efficient exact implementations. Practically, performance comes from an inner arithmetic layer that replaces general-purpose linear algebra with specialized data structures and constant-time kernels, so that the hot loop is dominated by cache-resident integer operations rather than dense matrix routines. The resulting implementation enables lookup-table generation and exact meet-in-the-middle synthesis at T-counts that are impractical for baseline exact-synthesis implementations on comparable hardware.

This work is intentionally scoped. It does not circumvent worst-case hardness results for exact non-Clifford optimization, and it is not intended to replace heuristic compilation for large-scale circuits. Instead, it provides reusable, provably optimal two-qubit building blocks and an engine for generating additional ground-truth instances, both of which are directly applicable as backends for heuristic optimizers. The principal limitation at present is that canonicalization and equivalence checking remain dominant costs at larger depths; further progress will rely on more efficient canonical forms and quotienting strategies.

Looking forward, two directions are immediate. First, improving canonicalization and scaling lookup-table generation (including distributed generation) will increase the practical reach of offline exact synthesis as a reusable resource. Second, the separation between an outer enumeration/MITM layer and an inner gate-set-aware arithmetic layer suggests a path to transferring these methods to other discrete gate sets and cost metrics whenever exact representations and efficient equivalence testing are available. More broadly, our results support a compilation model in which small, exhaustively characterized subroutines serve as reusable infrastructure inside hybrid compiler pipelines.

\section{Acknowledgements}\label{sec:Acknowledgments}
The authors thank John Weston for early code contributions and Stephen Jordan for helpful discussions. We also thank the Mason Experimental Geometry Lab (MEGL) for computational support, and the MEGL students Mark Dubynskyi and Safiuddeen Salem for their dedicated efforts in exploratory and preliminary computational work.

This material is based upon work supported by the National Science Foundation under Grant No. CCF-2038024. Any opinions, findings, and conclusions or recommendations expressed in this material are those of the author(s) and do not necessarily reflect the views of the National Science Foundation.
This project was supported by resources provided by the Office of Research Computing at George Mason University (URL: https://orc.gmu.edu) and funded in part by grants from the National Science Foundation (Award Number 2018631).

\section{Statement on Generative AI}\label{sec:AI}
In authoring this manuscript, the authors exploited the use of ChatGPT. The intellectual content was derived without AI. Generative AI was used to generate and clean up both prose and code under close supervision. Additionally, AI was used to unify notation and clarify theorems, so some artifacts of its use may persist.

\bibliographystyle{apsrev4-2}
\bibliography{bibliography}

\end{document}